\numberwithin{equation}{section}
\def \rank{{\mathrm{rank}} \, }
\theoremstyle{plain}
\newtheorem{lemma}{Lemma}[section]
\newtheorem{theorem}[lemma]{Theorem}
\newtheorem{corollary}[lemma]{Corollary}
\newtheorem{prop}[lemma]{Proposition}
\theoremstyle{definition}
\newtheorem{remark}{Remark}[section]
\newtheorem*{theorem-non}{Theorem}
      \newcommand{\N}{{\mathbb N}}
      \newcommand{\R}{{\mathbb R}}
      \newcommand{\C}{{\mathbb C}}
\newcommand{\vr}{\varepsilon}
              \newcommand{\Id}{\operatorname{Id}}
\newcommand{\expe}{\mathbb{E}}
\newcommand{\prob}{\mathbb{P}}
\newcommand{\svr}{\sqrt{\vr}}
\newcommand{\trace}{\mathrm{tr}}
\newcommand{\h}{{\mathcal H}}
\newcommand{\eps}{\epsilon}
\newcommand{\uno}{\mathbbm{1}}
\begin{document}

\pagestyle{headings}

\author{M. Junge, T. Oikhberg  and C. Palazuelos}

\title{Reducing the number of inputs in nonlocal games}

\thanks{The first author was partially supported by the NSF DMS-1201886. The second author was partially supported by Simons Foundation travel award 210060. The third author was partially supported by MINECO (grant MTM2011-26912), the european CHIST-ERA project CQC (funded partially by MINECO grant PRI-PIMCHI-2011-1071) and ``Ram\'on y Cajal'' program. The first and third authors are partially supported by ICMAT Severo Ochoa Grant SEV-2011-0087 (Spain).}

\begin{abstract}
In this work we show how a vector-valued version of Schechtman's empirical method can be used to reduce the number of inputs in a nonlocal game $G$ while preserving the quotient $\beta^*(G)/\beta(G)$ of the quantum over the classical bias. We apply our method to the Khot-Vishnoi game, with exponentially many questions per player, to produce another game with polynomially many ($N\approx n^8$) questions so that the quantum over the classical bias is $\Omega (n/\log^2 n)$. 
\end{abstract}

\maketitle

\section*{Introduction and main result}
A remarkable feature of quantum mechanics is the fact that two observers, each holding half of an entangled quantum state, can perform suitable measurements to produce some probability distributions which cannot be explained by a Local Hidden Variable Model. This was first showed by Bell \cite{Bell}, based on a previous intuition of Einstein, Podolski and Rosen in \cite{EPR}, and the nowadays routine experimental verification of this phenomenon \cite{AGR81, hensen2015experimental} provides the strongest evidence that Nature does not obey the laws of classical mechanics.

A Bell experiment can be understood by means of the so called \emph{nonlocal games}. In a bipartite game $G$, Alice and Bob are asked questions $x$ and $y$ respectively according to a fixed and known probability distribution $\pi$, and they are required to answer outputs $a$ and $b$ respectively. Let us assume that $x,y\in \{1,\cdots, N\}$ and $a,b\in \{1,\cdots, K\}$, although the setting could be more general. For each pair of questions $(x, y)\in [N]\times [N]$ and pair of answers $(a, b)\in [K]\times [K]$ there is a known probability $V(x,y,a,b)\in [0,1]$ of winning the game, so that the game $G$ is completely determined by $\pi$ and $V$. Then, the aim of the players is to maximize the bias of the game\footnote{The situation where the players maximize the value of the game is also very interesting, but the optimization of the bias is more suitable in this work.}, defined as the absolute value of the difference between the winning probability and $1/2$. To this end, the players can agree on a strategy 
before the game starts, which is completely described by the numbers $P(a,b|x,y)$ giving the probability of answering the couple $(a,b)$ if they are asked the couple $(x,y)$, but they are not allowed to communicate to each other once the game has started.

A \emph{classical strategy} $P$  for Alice and Bob is given by some functions $f_A:[N]\rightarrow [K]$, $f_B:[N]\rightarrow [K]$ so that we define $P(a,b|x,y)=1$ if $f_A(x)=a$ and $f_B(y)=b$ and $P(a,b|x,y)=0$ otherwise. Then, the \emph{classical bias} of the game $\beta(G)$ is defined as the largest bias of the game where the players use classical strategies. A \emph{quantum strategy} between Alice and Bob is of the form $P(x,y|a,b)=tr(E_x^a\otimes F_b^y \rho)$ for every $x,y,a,b$, where $\{E_x^a\}_{x,a}$ is a set of nonnegative operators acting on a Hilbert space $\mathcal H_A$ such that $\sum_a E_x^a=\uno$ for every $x$ (and analogously for $\{E_y^b\}_{y,b}$) and $\rho$ is a density operator (nonnegative operator with trace one) acting on $\mathcal H_A\otimes \mathcal  H_B$. The \emph{quantum bias} of the game $\beta^*(G)$ is defined as the largest bias when the players use quantum strategies. 

The existence of quantum probability distributions which cannot be explained by a Local Hidden Variable Model is equivalent to the existence of certain games for which the quantum bias is strictly larger than the classical bias. A famous example is given by the CHSH game \cite{CHSH}, where each player is asked a random bit ($N=2$) and they must reply with one bit each ($K=2$). The players win the game if and only if the XOR of the answers is equal to the AND of the questions. Note that here $V(x,y,a,b)\in \{0,1\}$ for every $x,y,a,b$. It is well known that the classical bias is at most 1/4, while  the quantum bias is $\cos(\pi/8)^2-1/2 \approx  0.35$. 

The aim of this work is to study how much quantum mechanics can deviate from classical mechanics, and a natural way to quantify this deviation is via the quotient $\beta^*(G)/\beta(G)$. This quantity has been deeply studied during the last years and, beyond its theoretical interest, it has been shown to be very useful regarding its applications to different contexts such as dimension witnesses, communication complexity, the study of quantum nonlocality in the presence of noise or/and detector inefficiencies and so on \cite{JPPVW}. In fact, in order to consider all relevant parameters in the problem we will denote by $\beta_d^*(G)$ the quantum bias of the game $G$ when the players are restricted to the use of $d$-dimensional quantum states $\rho$ in the corresponding quantum strategies, so that $\beta^*(G)=\sup_d\beta_d^*(G)$. It was proved in \cite{JPI} (see also \cite{Lo12}, \cite{PV-Survey}) that there is a universal constant $C>0$ such that for every bipartite game $G$ with $N$ questions and $K$ answers 
per player we have 
\begin{align}\label{upper bounds}
\frac{\beta_d^*(G)}{\beta(G)}\leq C  \ell, \text{     }\text{     }\text{   where    }  \text{     }\text{       } \ell=\min\{N,K,d\}.
\end{align}

A prominent example of a game leading to a large quotient $\beta^*(G)/\beta(G)$ was given in
\cite{BRSW}, where the authors showed that the so called Khot-Vishnoi game $G_{KV}$ \cite{KhVi},
defined with $N=2^n/n$ questions and $K=n$ answers per player, leads to a quotient
$\beta_n^*(G_{KV})/\beta(G_{KV})\geq Cn/\log^2 n$, where $C$ is a universal constant
\footnote{In \cite{BRSW} the authors study the quotient of the quantum over 
the classical value of the game rather than the bias.}.
According to (\ref{upper bounds}), this example is essentially optimal in the number of answers $K$ and in the dimension of the Hilbert space $d$. However, the number of questions is exponentially far away from the best known upper bound $O(N)$. Our main theorem states that the same estimate can be obtained with polynomially many inputs.  
\begin{theorem}\label{main theorem}
There exists a game $G$ with $N\approx n^8$ questions and $n$ answers per player such that 
$$\frac{\beta_n^*(G)}{\beta(G)}\geq C\frac{n}{\log^2 n}$$for a certain universal constant $C$. 
\end{theorem}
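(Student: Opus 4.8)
The plan is to deduce Theorem~\ref{main theorem} from a general input-reduction principle --- the main tool of the paper, built on a vector-valued form of Schechtman's empirical method --- applied to the Khot--Vishnoi game $G_{KV}$ with parameters tuned so that the reduced game has $\approx n^8$ questions. I would begin by recording the dictionary between biases and tensor norms: writing $T_G=\bigl(\pi(x,y)(V(x,y,a,b)-\tfrac12)\bigr)_{x,y,a,b}$ for the bias tensor of a game, $\beta(G)$ is, up to an absolute constant, the injective tensor norm of $T_G$ in $\ell_1^N(\ell_\infty^K)\otimes_\varepsilon\ell_1^N(\ell_\infty^K)$ --- the supremum of $|\langle T_G,s_A\otimes s_B\rangle|$ over the two local strategy polytopes --- while $\beta_d^*(G)$ is the analogous quantity with the $\ell_\infty^K$ factors carrying their natural operator space structure and the ambient state living on $\C^d\otimes\C^d$. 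The reduced game $G'$ is obtained by sampling a sub-collection of $\approx M$ questions for each player --- drawn according to the importance-sampling distribution read off from the slices of $T_G$ --- re-indexing them and equipping them with the renormalised predicate and question distribution induced from $G$; for $G_{KV}$, whose questions are cosets connected by a product noise, this produces a game that still looks like $G_{KV}$ on the chosen cosets. One then has to show that passing from $G$ to $G'$ decreases $\beta_n^*$ by at most a constant factor and increases $\beta$ by at most a constant factor.

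The quantum lower bound is the easy half. Fix an $n$-dimensional quantum strategy achieving bias $\beta_n^*(G_{KV})$ in $G_{KV}$; since its measurement operators are attached to individual questions, it restricts verbatim --- and in the same local dimension $n$ --- to a strategy for $G'$, and the bias it yields in $G'$ is (up to the renormalisation) the empirical average over the sampled questions of the bounded function $(x,y)\mapsto\sum_{a,b}(V(x,y,a,b)-\tfrac12)\,\tr(E_x^a\otimes F_y^b\rho)$, whose $\pi$-expectation is $\beta_n^*(G_{KV})$. Because this is a single fixed strategy, Hoeffding's inequality alone gives a bias $\ge\tfrac12\beta_n^*(G_{KV})$ with probability $1-\exp(-c\,M\,\beta_n^*(G_{KV})^2)$, which is overwhelming already for $M$ polylogarithmic, and a fortiori for $M\approx n^8$. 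This bounds $\beta_n^*(G')$ from below, with the witness still of local dimension $n$ as required by the statement.

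The classical upper bound is the heart of the matter, and is where the vector-valued empirical method is unavoidable. One cannot union bound over the classical strategies of $G'$: there are $n^{\Theta(M)}$ of them and each concentrates only to within an additive $\sim M^{-1/2}$, so the failure probability in a crude union bound outpaces the gain; and a net taken directly over the dual ball $\ell_\infty^M(\ell_1^n)$ of Alice's reindexed strategy space has metric entropy of order $Mn$, again far too large. The resolution is to first contract the \emph{effective} dimension of the problem using the structure of $G_{KV}$: after the change of variables underlying the analysis of \cite{KhVi,BRSW} --- hypercontractivity of the product-noise operator on the cube, and passage to the $n$-dimensional representation carrying the answers (equivalently, the bounded-dimensional space of measurement settings on $\C^n$) --- the bilinear form $(s_A,s_B)\mapsto\langle T_{G_{KV}},s_A\otimes s_B\rangle$ factors, up to a constant, through an auxiliary space of \emph{polynomial} dimension $n^{O(1)}$, so the classical bias of $G'$ is governed by an $\varepsilon$-net of log-cardinality $n^{O(1)}\log(1/\varepsilon)$ rather than by all strategies. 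A Bernstein/Chernoff inequality for sums of independent, uniformly bounded random tensors --- the ``vector-valued'' ingredient, which controls the deviation of the subsampled injective norm from its mean at each net point --- together with a union bound over the net then gives $\beta(G')\le C\,\beta(G_{KV})$ with high probability as soon as $c\,M\,\beta(G_{KV})^2\gg n^{O(1)}\log(1/\varepsilon)$. Since $\beta(G_{KV})$ is polynomially small in $n$ (of order $n^{-1}$ up to logarithmic factors), this forces $M$ polynomial in $n$, and bookkeeping the exponents contributed by the net dimension and by the Bernstein estimate is what produces the stated $M\approx n^8$. Identifying that polynomial effective dimension, and pushing the concentration uniformly through the injective-tensor-norm formalism, is, in my view, the main obstacle.

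Finally I would set $M=\Theta(n^8)$ so that the quantum and classical estimates hold simultaneously with positive probability: there is then a concrete game $G$ with $\approx n^8$ questions and $n$ answers per player satisfying $\beta_n^*(G)\ge\tfrac12\beta_n^*(G_{KV})$ and $\beta(G)\le C\,\beta(G_{KV})$. Combined with the bound $\beta_n^*(G_{KV})/\beta(G_{KV})\ge c\,n/\log^2 n$ from \cite{KhVi,BRSW} (in its bias form), this gives $\beta_n^*(G)/\beta(G)\ge C'\,n/\log^2 n$, which is exactly the assertion of Theorem~\ref{main theorem}.
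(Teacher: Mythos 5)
Your skeleton --- subsample questions, preserve the quantum bias via one fixed witness, and control the classical bias uniformly through a low-dimensional net --- matches the paper's strategy in outline, and your diagnosis that a naive union bound over classical strategies fails is correct. But the step you yourself identify as the main obstacle, namely that the classical bias functional of the \emph{restricted Khot--Vishnoi game} ``factors through an auxiliary space of polynomial dimension'', is not merely hard: it is false for the game as you define it, and repairing it forces one to change the game. The bilinear form $\langle G_{KV},s_A\otimes s_B\rangle$ factors as $\langle V^*s_A,V^*s_B\rangle$ through $\ell_2^{2^n}$, where $V=C_\mu$ is the noise operator; since $Vw_A=\vr^{|A|/2}w_A$ on the Walsh basis, $V$ has full rank $2^n$, so the slices of the KV tensor span an exponential-dimensional subspace of $\ell_1^N(\ell_\infty^n)$, and the empirical method (or any net argument) applied to the literal restriction of $G_{KV}$ would require exponentially many sampled questions. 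Only the composition $UV$ with the quantum witness has polynomial rank ($\leq n^2$, since it maps into $M_n$). The paper therefore first ``cuts'' the game, replacing $V$ by $\tilde V=VR$ with $R$ an isometry onto $(\ker UV)^\perp$: this preserves $\|UV\|_{cb}$ and does not increase $\|V\|$, but the tensor $\sum_i\tilde V\theta_i\otimes\tilde V\theta_i$ is no longer a restriction of $G_{KV}$, is not explicit, and may have negative coefficients --- which is precisely why the conclusion must be routed through Bell functionals and Propositions \ref{connection min-viol} and \ref{p:Regev} rather than through ``$G_{KV}$ on the chosen cosets''. Your proposal contains no analogue of this cut, and without it the classical half cannot close.

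A second, related gap concerns the quantum side and the exponent $8$. After the cut, preserving the quantum value under the coordinate restriction is not ``restrict a fixed strategy verbatim and apply Hoeffding'': the relevant quantity is a completely bounded norm, and one needs the sampling map $J\otimes\mathrm{id}$ of Remark \ref{remark: map J} to be a \emph{complete} isomorphism on the subspace carrying the tensor, so that $S=U\circ(J^{-1}\otimes\mathrm{id})$ remains a complete contraction. The paper achieves this by running the empirical method not on the $n^2$-dimensional space $E$ but on $S_1^n[E]$, of dimension $n^4$, using $\|S\|_{cb}=\|\mathrm{id}_{S_1^n}\otimes S\|$ for maps into $M_n$; Proposition \ref{p:reduction} with $r=1$ then requires $\approx m^2=n^8$ coordinates. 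Your exponent bookkeeping identifies neither contribution. (Also, your concentration estimates are applied to averages of the form $\frac{1}{M^2}\sum_{i,j}g(x_i,y_j)$, which are bilinear in the samples rather than sums of independent terms; this is fixable, but not by ``Hoeffding's inequality alone''.)
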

Although our result is still far from the best upper bound $O(N)$ for $\frac{\beta^*(G)}{\beta(G)}$ in the number of questions, Theorem \ref{main theorem} shows that the use of exponentially many questions is not needed to obtain (almost) optimal estimates in the rest of the parameters, as it could be guessed from the example in \cite{BRSW}. We must note that in \cite{JPI} the authors gave an example of a nonlocal game $G_{JP}$ with
$n$ questions and $n$ answers per player for which
$\beta_n^*(G_{JP})/\beta(G_{JP})\geq \sqrt{n/\log n}$.
This is only quadratically far from the best upper bounds in all the parameters of interest
at the same time. Hence, the key point in Theorem \ref{main theorem} is that it preserves
the optimality of the game $G_{KV}$ shown in \cite{BRSW}. In fact, the $n$-dimensional quantum
state used in \cite{BRSW}, as well as in Theorem \ref{main theorem}, to obtain the corresponding
lower bound for $\beta_n^*(G)$ is the maximally entangled state; and it is known \cite{Palazuelos}
that in this case the corresponding quotient $\beta_n^*(G)/\beta(G)$ is upper bounded by
$C\frac{n}{\sqrt{\log n}}$, for a universal constant $C$. So the logarithmic factor cannot be removed.

As we will explain in Section \ref{sec comments}, our procedure is very general and it can be applied in many other contexts. From a physical point of view, passing from an exponential number of parameters to polynomially many can make an important difference for experimental realizations, since the former situation is unapproachable in practice. On the other hand, the possibility of reducing the number of questions in a game while preserving certain properties can also be of independent interest in theoretical computer science, where two-prover one-round games play a central role.

The main reason to work with the bias of a game $G$ rather than with its value is that in the former case we can adopt a much more general point of view. More precisely, for a family of real coefficients $(M_{x,y}^{a,b})_{x,y;a,b}$, we define the \emph{Bell functional} $M$ acting on the set of strategies by means of the dual action $\langle M, P\rangle :=\sum_{x,y;a,b}M_{x,y}^{a,b}P(a,b|x,y)$. Then, we define its classical and quantum value, respectively, as 
\begin{align}\label{general Bell functional}
\omega(M)=\sup_{P\in \mathcal P_c}|\langle M, P\rangle|\text{      } \text{      }\text{      and     }\text{      } \text{        } \omega^*(M)=\sup_{P\in \mathcal P_q}|\langle M, P\rangle|,
\end{align}where $\mathcal P_c$ and $\mathcal P_q$ denote respectively the set of classical and quantum strategies defined above. It is not difficult to see that there is a one-to-one correspondence between the quotient $\beta_n^*(G)/\beta(G)$ between the quantum and the classical bias of games and the quotient $\omega_n^*(M)/\omega(M)$ between the quantum and the classical value of Bell functionals \cite[Section 2]{BRSW}. In particular, if the element $M$ is a game itself so that $M_{x,y}^{a,b}=\pi(x,y)V(x,y,a,b)$ for very $x,y,a,b$, the values in (\ref{general Bell functional}) coincide with its classical and quantum value respectively. The Khot-Vishnoi game $G_{KV}$ used in \cite{BRSW} actually gives a large quotient $\omega_n^*(G_{KV})/\omega(G_{KV})\geq Cn/\log^2 n$. However, as we just said above, one can easily find another game, which we will also denote by $G_{KV}$, for which the same estimates hold for the bias. The possibility of considering real (non necessarily positive) coefficients  will be 
important in our work, as we explain in Section \ref{sec comments}.

The relevance of equation (\ref{general Bell functional}) is that it allows to connect the study of Bell inequalities and nonlocal games to the theory of operate spaces \cite{PV-Survey}.  In particular, if we realize $M=\sum_{x,y,a,b}M_{x,y}^{a,b}(e_x\otimes e_a)\otimes (e_y\otimes e_b)$ as an element in $\ell_1^{N}(\ell_\infty^{K})\otimes \ell_1^{N}(\ell_\infty^{K})$, one can see that
\begin{align}\label{min-epsilon}
\omega(M)\approx\|M\|_{\ell_1^{N}(\ell_\infty^{K})\otimes_\epsilon \ell_1^{N}(\ell_\infty^{K})}  \text{   }\text{  and  }\text{   }
 \omega^*(M)\approx\|M\|_{\ell_1^{N}(\ell_\infty^{K})\otimes_{min} \ell_1^{N}(\ell_\infty^{K})},
\end{align}
where here $\epsilon$ denotes the injective tensor norm when $\ell_1^{N}(\ell_\infty^{K})$ is seen as a complex Banach space and $min$ denotes the minimal tensor norm for which one must understand $\ell_1^{N}(\ell_\infty^{K})$ as an operator space \cite{PiLp}. The symbol $\approx$ can be shown to be an equality if the element $M$ has positive coefficients (in particular, when $M$ is a game). Otherwise, the relation is more subtle. However, the important point for us is that finding an element $M$ for which the quotient $\|M\|_{\ell_1^{N}(\ell_\infty^{K})\otimes_{min} \ell_1^{N}(\ell_\infty^{K})}/\|M\|_{\ell_1^{N}(\ell_\infty^{K})\otimes_\epsilon \ell_1^{N}(\ell_\infty^{K})}$ is large immediately provides an element $\tilde{M}$ for which $\omega^*(\tilde{M})/\omega(\tilde{M})$ is large (see Proposition \ref{connection min-viol} for details). As shown in previous works, connection (\ref{min-epsilon}) is extremely useful since it allows the use of all the machinery of Banach/operator spaces.

The main tool to prove Theorem \ref{main theorem} is a vector-valued version of Schechtman's empirical method \cite{Sch87}. This method allows to reduce the dimension $N$ when one has an embedding $E\hookrightarrow  \ell_1^{N}(X)$ and the space $E$ has low dimension. The reader will quickly understand that the previous description of $\omega(M)$ and $\omega^*(M)$ by means of tensor norms makes the empirical method a natural tool to be applied  in our context. However, in order to prove our result we will need to save some obstacles. The first one is that the $min$ norm behaves well only in the operator space category; hence, we will need to deal with an extra matrix structure involved in its definition. The second main difficulty is that the empirical method applies on low dimensional spaces. However, as we will see below we do not have that property when we deal with the Khot-Vishnoi game. Then, we will need to ``cut" the game in order to apply our method (see Section \ref{s:Main result} for details).

The paper is organized as follows. In Section \ref{s:preliminaries} we will introduce some basic tools about operators spaces and its connections to nonlocal games. In Section \ref{s:empirical} we will prove a vector-valued version of Schechtman's empirical method, needed to prove our main theorem, and we will postpone the analysis of some possible improvements of our results to Section \ref{sec: improving empirical method}.  In Section \ref{s:Main result} we will first analyze the Khot-Vishnoi game from a mathematical point of view to highlight some of its properties. Then, we will see how the empirical method can be applied on it to obtain Theorem \ref{main theorem}.
\section{Preliminaries on operator spaces}\label{s:preliminaries}
We start by recalling some basic ideas from operator space theory and its connection to nonlocal games; further details can be found in \cite{EffrosRuan}, \cite{PiOS} and \cite{PV-Survey}.
An \emph{operator space} $X$ is a closed subspace of $\mathcal B(\h)$, the bounded operators on a Hilbert space $\h$. The inclusions $M_d(X) \subseteq M_d(\mathcal B(\h))\simeq \mathcal B(\h^{\otimes d})$ induce \emph{matrix norms} $\|\cdot \|_d$ on $M_d(X)$. Ruan's Theorem (\cite[Theorem 2.3.5]{EffrosRuan}) states that such matrix norms also characterize operator spaces. More precisely, the existence of such an inclusion into $\mathcal B(\h)$ is equivalent to having a sequence of matrix norms $(M_d(X),\|\cdot\|_d)$ satisfying the following conditions: for every integers $c$ and $d$,
\begin{itemize}
\item $\|v \oplus w\|_{c+d} = \max\{\|v\|_c,\|w\|_d \}$ for every $v\in M_c(X)$ and $w\in M_d(X)$, 
\item $\|\alpha \cdot v \cdot \beta\|_c \leq \|\alpha\| \, \|v\|_c \, \|\beta\|$ for every $\alpha, \beta \in M_c$ and $v\in M_c(X)$. 
\end{itemize}

To specify an operator space structure, one can either provide an explicit inclusion of $X$ into $\mathcal B(\h)$ or describe the matrix norms on $M_d(X)$ for every $d\geq 1$. A canonical example of an operator space is $M_N$, the space of complex matrices of order $N$, with its operator space structure given by the usual identification $M_N \simeq \mathcal B(\mathbb C^N)$. The matrix norm $\|\cdot\|_d$ on $M_d(M_N)$ is then the usual operator norm on $M_{dN}$. This identification also induces a natural operator space structure on $\ell_\infty^N$, by identifying this space with the diagonal of $M_N$. In particular, given an element $\sum_{i=1}^Nx_i\otimes e_i\in M_d\otimes \ell_\infty^N$, the corresponding norm is given by $$\Big\|\sum_{i=1}^Nx_i\otimes e_i\Big\|_{M_d(\ell_\infty^N)}=\sup_{i=1,\cdots, N}\|x_i\|_{M_N}.$$

Given a linear map $T:X\rightarrow Y$  between operator spaces, let $T_d:=\Id_{M_d} \otimes T: M_d(X)\rightarrow M_d(Y)$ denote the new linear map defined by 
$$T_d((v_{ij})_{i,j}):=(T(v_{ij}))_{i,j}.$$
The map $T$ is said to be \emph{completely bounded} if its completely bounded norm if finite:
$$
\|T\|_{cb} \,:=\, \sup_{d\in\N} \|T_d\|\,<\, \infty.
$$

In fact, we can define analogously the notion of complete contraction, compete isomorphism, complete isometry and so on, by requiring to have the corresponding property when tensorizing with $M_d$. 

As in the Banach space category, one can define the \emph{dual operator space} $X^*$ of the operator space $X$, with matrix norms given by
\begin{align}\label{duality}
M_d(X^*)=CB(X,M_d),
\end{align}where here $CB(X,M_d)$ denotes the space of completely bounded maps from $X$ to $M_d$ endowed with the completely bounded norm. In particular, this allows us to define an operator space structure on $\ell_1^N=(\ell_\infty^N)^*$ and on $S_1^N=M_N^*$.

By way of example, it is well known and easy to check that the operator space dual of
$\ell_\infty^N$ is the space $\ell_1^N$: if $(e_i)_{i=1}^N$ is the canonical basis
of $\ell_1^N$, and $x_i$ are $d \times d$ matrices, then
$$
\Big\| \sum_i x_i \otimes e_i \big\|_{M_d(\ell_1^N)} =
\sup \left\{ \Big\| \sum_i x_i \otimes a_i \big\|_{dm} : 
m \in \N, \max_i \|a_i\|_{M_m} \leq 1 \right\} .
$$

Given two operator spaces $X \subseteq \mathcal B(\h)$ and $Y \subseteq \mathcal B(\mathcal K)$,
their algebraic tensor product $X \otimes Y$ is a subspace of $\mathcal B(\h \otimes \mathcal K)$
and their \emph{minimal} (or \emph{injective}) \emph{operator space tensor product}
$X \otimes_{\min} Y$ is the closure of $X \otimes Y$ in $\mathcal B(\h \otimes \mathcal K)$.
In particular, note that for every operator space $X$, we trivially obtain that
$M_d\otimes_{min} X=M_d(X)$ holds isometrically. If $X$ and $Y$ are finite dimensional,
then we have a natural algebraic identification $X \otimes Y = L(X^*,Y)$,
between the algebraic tensor product and the set of linear maps from $X^*$ to $Y$.
Here, for a given $u =\sum_{i=1}^l x_i \otimes y_i\in X\otimes Y$ one defines
the linear map $T_u:X^*\rightarrow Y$ by $T_u(x^*)=\sum_{i=1}^l x^*(x_i) y_i$
for every $x^*\in X^*$. In addition, given a linear map $T\in L(X^*,Y)$,
we can associate the element $u_T=\sum_{i=1}^sx_i\otimes T(x_i^*)\in X\otimes Y$,
where $(x_i)_{i=1}^s$ and $(x_i^*)_{i=1}^s$ are dual basis of $X$ and $X^*$ respectively.
The previous correspondence induces the following isometric identifications:
\begin{align}\label{isometric ident}
E \otimes_{\eps} F =B(E^*,F), \text{    }\text{    }\text{    }\text{    }\text{    } X \otimes_{\min} Y = CB(X^*,Y)
\end{align}
for $E$ and $F$ finite dimensional Banach spaces and $X$ and $Y$ finite dimensional
operator spaces. We recall that for  $u =\sum_{i=1}^l e_i \otimes f_i\in E\otimes F$,
its $\eps$ norm is defined as
$$\epsilon(u)=\sup\Big\{\Big|\sum_{i=1}^l e^*(e_i) f^*(f_i)\Big|:
 e\in B_{E^*}, \text{   }f\in B_{F^*}\Big\}.$$
Here, $B_Z$ denotes the unit ball of the Banach space $Z$. In analogy, it is not difficult
to see that if $u =\sum_{i=1}^l x_i \otimes y_i\in X\otimes Y$,
$$\|u\|_{\min}=\sup
\Big\{
\Big\|\sum_{i=1}^l T(x_i) \otimes S(y_i)
\Big\|_{\mathcal B(\h\otimes \mathcal K)}
\Big\},$$where the supremum runs now over all complete contractions
$T:X\rightarrow \mathcal B(\h)$, $S:Y\rightarrow \mathcal B(\mathcal K)$.

One can also check that for every operator space $X$, the identification
$\ell_\infty^N(X)=\ell_\infty^N\otimes_{min}X$ holds isometrically.
This identification allows to endow the Banach space $\ell_\infty^N(X)$
with a natural operator space structure, which is the one induced by
$\ell_\infty^N(X)\subseteq \ell_\infty^N(\mathcal B(\h))$.
According to the comments above, one can naturally use duality (\ref{duality})
to obtain an operator space structure on $\ell_1^N(X)$ for every operator space $X$.

In the current work we use the space $\ell_1^N(\ell_\infty^K)$,
with the norm described below. If $(e_i)_{i=1}^N$ and $(f_j)_{j=1}^K$ are
the canonical bases of $\ell_1^N$ and $\ell_\infty^K$ respectively, then,
for $d \times d$ matrices $a_{ij}$, we have
$$
\Big\| \ \sum_{i=1}^N\sum_{j=1}^K a_{ij} \otimes e_i \otimes f_j \Big\|_{M_d(\ell_1^N(\ell_\infty^K))} =
\sup \Big\| \sum_{i=1}^N\sum_{j=1}^K a_{ij} \otimes u_i (f_j) \Big\|_{M_d(B(H))} ,
$$
with the supremum taken over all complete contractions $u_i : \ell_\infty^K \to B(H)$
($u_i$ acts on the $i$-th copy of $\ell_\infty^K$, spanned by
$e_i \otimes f_j$, $1 \leq j \leq K$).
The dual of $\ell_1^N(\ell_\infty^K)$ is the space $\ell_\infty^N(\ell_1^K)$,
whose operator space norm is described, for $d\geq 1$, by
\begin{equation}
\Big\| \sum_{i=1}^N\sum_{j=1}^K a_{ij} \otimes e_i \otimes f_j \Big\|_{M_d(\ell_\infty^N(\ell_1^K))} =
\max_{i=1,\cdots, N} \sup \Big\{ \Big\| \sum_{j=1}^K a_{ij} \otimes u_{ij} \Big\|_{M_{d^2}} :
 \|u_{ij}\|_{M_d} \leq 1 \Big\}.
\label{eq:infty(1)}
\end{equation}
Here, $(e_i)_{i=1}^N$ and $(f_j)_{j=1}^K$ are the canonical bases of
$\ell_\infty^N$ and $\ell_1^K$ respectively.

The following proposition, which was first shown in \cite{JPPVW} and which is explained in much detail in \cite{PV-Survey}, will be crucial in our work.
\begin{prop}\cite[Corollary 4]{JPPVW}\label{connection min-viol}
Let $M=\sum_{x,y,a,b}M_{x,y}^{a,b}(e_x\otimes e_a)\otimes (e_y\otimes e_b)$ be an element in $\ell_1^{N}(\ell_\infty^{K})\otimes \ell_1^{N}(\ell_\infty^{K})$ such that
\begin{align*}
\frac{\|M\|_{\ell_1^{N}(\ell_\infty^{K})\otimes_{min} \ell_1^{N}(\ell_\infty^{K})} }{\|M\|_{\ell_1^{N}(\ell_\infty^{K})\otimes_\epsilon \ell_1^{N}(\ell_\infty^{K})}}\geq \alpha.
\end{align*}
Then, the Bell functional $\tilde{M}=(\tilde{M}_{x,y}^{a,b})_{x,y;a,b}$,  where $x,y=1,\cdots ,N$ and $a,b=1, \cdots, K+1$, obtained by adding extra zeros to the element $M$, verifies 
\begin{align*}
\frac{\omega^*(\tilde{M})}{\omega(\tilde{M})}\geq C\alpha,
\end{align*}where $C$ is a universal constant which can be taken equal to $1/16$.
\end{prop}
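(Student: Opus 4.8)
The plan is to deduce the statement from two separate inequalities: a (nearly trivial) classical upper bound $\omega(\tilde M)\leq\|M\|_\epsilon$ and a quantum lower bound $\omega^*(\tilde M)\geq c\,\|M\|_{\min}$ for an explicit universal $c$ (the value $c=1/16$ in the statement is comfortably achievable), where I abbreviate $\|M\|_{\min}:=\|M\|_{\ell_1^N(\ell_\infty^K)\otimes_{\min}\ell_1^N(\ell_\infty^K)}$ and $\|M\|_\epsilon:=\|M\|_{\ell_1^N(\ell_\infty^K)\otimes_\epsilon\ell_1^N(\ell_\infty^K)}$; the conclusion follows by dividing, using the hypothesis $\|M\|_{\min}/\|M\|_\epsilon\geq\alpha$.

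For the classical bound, $\mathcal P_c$ is a polytope whose extreme points are the deterministic product strategies, so $\omega(\tilde M)=\max_{f,g}\bigl|\sum_{x,y}\tilde M_{x,y}^{f(x),g(y)}\bigr|$ over $f,g:[N]\to[K+1]$. Fixing $f,g$, set $\phi_x^a=\uno[f(x)=a]$ for $a\leq K$ (so that $\phi_x^a=0$ for all $a\leq K$ when $f(x)=K+1$), and likewise $\psi_y^b$. Then $(\phi_x^a)$ and $(\psi_y^b)$ lie in the unit ball of $\ell_\infty^N(\ell_1^K)=(\ell_1^N(\ell_\infty^K))^*$, and since $\tilde M$ is $M$ padded with zeros at the $(K+1)$-th answer, $\sum_{x,y}\tilde M_{x,y}^{f(x),g(y)}=\sum_{x,y,a,b}M_{x,y}^{a,b}\phi_x^a\psi_y^b$, whose modulus is at most the (real, hence also the complex) injective tensor norm of $M$.

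The quantum bound is the core of the argument. Using a finite-dimensional reduction we may fix complete contractions $T:\ell_1^N(\ell_\infty^K)\to B(\mathcal H_A)$ and $S:\ell_1^N(\ell_\infty^K)\to B(\mathcal H_B)$ with $\mathcal H_A,\mathcal H_B$ finite-dimensional, essentially attaining $\|M\|_{\min}=\bigl\|\sum_{x,y,a,b}M_{x,y}^{a,b}A_x^a\otimes B_y^b\bigr\|$ where $A_x^a:=T(e_x\otimes e_a)$, $B_y^b:=S(e_y\otimes e_b)$, together with unit vectors $\xi,\eta\in\mathcal H_A\otimes\mathcal H_B$ attaining this operator norm, the phase of $\eta$ chosen so that $\langle\eta|\sum_{x,y,a,b}M_{x,y}^{a,b}A_x^a\otimes B_y^b|\xi\rangle=\|M\|_{\min}$. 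The restriction of $T$ to the $x$-th copy of $\ell_\infty^K$ is a complete contraction $\ell_\infty^K\to B(\mathcal H_A)$, so by the Wittstock factorization of completely bounded maps on the commutative finite-dimensional C$^*$-algebra $\ell_\infty^K$ we may write $A_x^a=V_x^*P_{x,a}W_x$ with $\{P_{x,a}\}_{a=1}^K$ a (finite-dimensional) projection-valued measure and $\|V_x\|,\|W_x\|\leq1$, and similarly $B_y^b=\widetilde V_y^*P'_{y,b}\widetilde W_y$. The key step is to turn this factorization into genuine $(K+1)$-outcome POVMs on the doubled spaces $\C^2\otimes\mathcal H_A$, $\C^2\otimes\mathcal H_B$: with $R_x:=(V_x\ W_x)$ the row contraction (so $\|R_x\|\leq\sqrt2$), put $\mathcal A_x^a:=\tfrac12 R_x^*P_{x,a}R_x\geq0$ for $a\leq K$ and $\mathcal A_x^{K+1}:=\uno-\tfrac12 R_x^*R_x\geq0$; this is a valid POVM, one of whose off-diagonal $\C^2$-blocks equals $\tfrac12 A_x^a$. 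Define $\{\mathcal B_y^b\}_{b=1}^{K+1}$ symmetrically. It is at this point that passing from $M$ to $\tilde M$ is essential: the slack outcome $K+1$ must carry no weight of the functional. Finally take the shared pure state $|\Omega\rangle=\tfrac1{\sqrt2}(|0\rangle_A|0\rangle_B|\eta\rangle+|1\rangle_A|1\rangle_B|\xi\rangle)$.

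A direct block computation gives, for $a,b\leq K$,
\begin{align*}
\langle\Omega|\,\mathcal A_x^a\otimes\mathcal B_y^b\,|\Omega\rangle=\tfrac18\Bigl(\langle\eta|\,E_x^a\otimes G_y^b\,|\eta\rangle+\langle\xi|\,\widehat E_x^a\otimes\widehat G_y^b\,|\xi\rangle+2\,\re\,\langle\eta|\,A_x^a\otimes B_y^b\,|\xi\rangle\Bigr),
\end{align*}
where $E_x^a:=V_x^*P_{x,a}V_x$, $\widehat E_x^a:=W_x^*P_{x,a}W_x$ are positive with $\sum_a E_x^a\leq\uno$, $\sum_a\widehat E_x^a\leq\uno$, and similarly $G_y^b,\widehat G_y^b$ on Bob's side. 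Summing against $M_{x,y}^{a,b}$ over $a,b\leq K$, the middle term contributes $\tfrac14\|M\|_{\min}$ by the choice of phase, while the first term equals $\tfrac18$ times the value of $\tilde M$ on the quantum strategy given by the POVMs $\{E_x^a\}_{a\leq K}\cup\{\text{slack}\}$, $\{G_y^b\}_{b\leq K}\cup\{\text{slack}\}$ and the state $|\eta\rangle\langle\eta|$ (again using that $\tilde M$ vanishes at answer $K+1$), hence is at most $\tfrac18\omega^*(\tilde M)$ in modulus; likewise the third term. Writing $P$ for the quantum strategy $(\{\mathcal A_x^a\},\{\mathcal B_y^b\},|\Omega\rangle\langle\Omega|)$, we obtain $\omega^*(\tilde M)\geq\langle\tilde M,P\rangle\geq\tfrac14\|M\|_{\min}-\tfrac14\omega^*(\tilde M)$, so $\omega^*(\tilde M)\geq\tfrac15\|M\|_{\min}$, which together with the classical bound gives $\omega^*(\tilde M)/\omega(\tilde M)\geq c\,\|M\|_{\min}/\|M\|_\epsilon\geq c\alpha$. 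The main obstacle is the construction in the previous paragraph: extracting honest POVMs from the operator-space factorization and paying the extra answer for it, and recognizing that the unavoidable ``diagonal'' cross-terms forced by the doubling are themselves values of $\tilde M$ on quantum strategies, so that they can be transferred to the $\omega^*(\tilde M)$ side of the inequality; keeping careful track of the remaining factors of $2$ yields the constant $1/16$.
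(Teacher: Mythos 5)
Your proof is correct, and it is essentially the argument behind the cited result: the paper itself does not prove this proposition but quotes it from \cite[Corollary 4]{JPPVW}, where the same two ingredients appear --- the trivial classical bound $\omega(\tilde M)\leq\|M\|_\epsilon$ via deterministic strategies sitting in the unit ball of $\ell_\infty^N(\ell_1^K)$, and the quantum bound obtained by factoring each complete contraction $\ell_\infty^K\to B(\mathcal H)$ as $V^*P_{(\cdot)}W$ and paying one extra outcome to turn the factorization into honest POVMs. The one place you genuinely deviate is in handling the diagonal blocks produced by the $\C^2$-doubling: the reference disposes of them by a polarization-type decomposition of the cb map into (completely) positive pieces, which is where the $1/16$ comes from, whereas you observe that these blocks are themselves values of $\tilde M$ on legitimate quantum strategies and absorb them into $\omega^*(\tilde M)$, yielding the cleaner bound $\omega^*(\tilde M)\geq\tfrac15\|M\|_{\min}$ (up to the $\epsilon$ lost in nearly attaining the min norm), which of course implies the stated constant. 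All the individual steps check out: the coordinate inclusions $\ell_\infty^K\hookrightarrow\ell_1^N(\ell_\infty^K)$ are complete contractions, $R_x^*R_x\leq 2\uno$ justifies the slack outcome, and the block computation of $\langle\Omega|\mathcal A_x^a\otimes\mathcal B_y^b|\Omega\rangle$ is right.
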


In fact, the constant $C$ can be taken $1/4$ if one allows to increase the dimension of the corresponding Hilbert space to compute  $\omega^*(\tilde{M})$, as explained in \cite[Lemma 4.3]{PV-Survey}.

Our construction (given in Section \ref{s:Main result}) yields
a matrix $M$ some of whose coefficients may be negative.
To return to the setting of games (corresponding to matrices
with positive coefficients), we observe that for a given Bell functional $M$ we can define the element $G$ as $$
G_{x,y}^{a,b} = \frac{1}{2N^2} +  \frac{1}{2N^2L}  \tilde{M}_{x,y}^{a,b}, \text{     }\text{  for every    } x,y,a,b;
$$where $L=\max_{x,y,a,b}|M_{x,y}^{a,b}|$ and $N$ is the number of inputs (questions). It is very easy to check that $G$ has positive coefficients with values in $[0,1]$ and it verifies that $$\displaystyle \frac{\beta^*(G)}{\beta(G)}=\frac{\omega^*(M)}{\omega(M)},$$where $\beta(G)$ (resp. $\beta^*(G)$) is the classical (resp. quantum) bias of the game $G$ defined as $$\beta(G)=\sup\Big\{\Big|\langle G,P\rangle-\frac{1}{2}\Big|:P\in \mathcal P_c\Big\}\text{     }\text{  and    }\text{   } \beta^*(G)=\sup\Big\{\Big|\langle G,P\rangle-\frac{1}{2}\Big|:P\in \mathcal P_q\Big\}.$$

This observation leads to the following consequence.
\begin{prop}\label{p:Regev}
Let $M=\sum_{x,y,a,b}M_{x,y}^{a,b}(e_x\otimes e_a)\otimes (e_y\otimes e_b)$ be an element
in $\ell_1^{N}(\ell_\infty^{K})\otimes \ell_1^{N}(\ell_\infty^{K})$ such that
\begin{align*}
\frac{\|M\|_{\ell_1^{N}(\ell_\infty^{K})\otimes_{min} \ell_1^{N}(\ell_\infty^{K})} }{\|M\|_{\ell_1^{N}(\ell_\infty^{K})\otimes_\epsilon \ell_1^{N}(\ell_\infty^{K})}}\geq \alpha.
\end{align*}
Then, there exists a game $G$ with $N$ inputs and $K+1$ outputs, with $\displaystyle \frac{\beta^*(G)}{\beta(G)}\geq C\alpha$, where, as before, we can take $C$ to be $1/16$.
\end{prop}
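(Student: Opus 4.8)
The plan is to obtain $G$ directly from $M$ by composing Proposition~\ref{connection min-viol} with the affine renormalization described just above the statement; no new idea is needed. First I would apply Proposition~\ref{connection min-viol} to $M$: since $\|M\|_{\ell_1^{N}(\ell_\infty^{K})\otimes_{min}\ell_1^{N}(\ell_\infty^{K})}/\|M\|_{\ell_1^{N}(\ell_\infty^{K})\otimes_\epsilon\ell_1^{N}(\ell_\infty^{K})}\geq\alpha$, that proposition produces the Bell functional $\tilde M=(\tilde M_{x,y}^{a,b})$, with $x,y\in[N]$ and $a,b\in[K+1]$ (this is $M$ with one extra, all-zero output added for each player), such that $\omega^*(\tilde M)/\omega(\tilde M)\geq\alpha/16$. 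I would also record that padding with zeros changes neither value: every strategy for $\tilde M$ restricts to one for $M$ (and conversely extends back) without altering the pairing $\langle\cdot,P\rangle$, so $\omega(\tilde M)=\omega(M)$ and $\omega^*(\tilde M)=\omega^*(M)$.

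Next, set $L=\max_{x,y,a,b}|M_{x,y}^{a,b}|=\max_{x,y,a,b}|\tilde M_{x,y}^{a,b}|$ and define
$$G_{x,y}^{a,b}=\frac{1}{2N^2}+\frac{1}{2N^2L}\tilde M_{x,y}^{a,b}.$$
I would first check that $G$ is a bona fide game: writing $G=\pi V$ with $\pi(x,y)=1/N^2$ the uniform distribution on $[N]\times[N]$ and $V(x,y,a,b)=\frac12+\frac{1}{2L}\tilde M_{x,y}^{a,b}$, the bound $|\tilde M_{x,y}^{a,b}|\leq L$ forces $V(x,y,a,b)\in[0,1]$, so $G$ has nonnegative entries lying in $[0,1]$ and $K+1$ outputs per player. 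Then I would compute the two biases. For any strategy $P$, classical or quantum, one has $\sum_{a,b}P(a,b|x,y)=1$ for every $(x,y)$: for classical $P$ this is immediate, and for quantum $P$ it follows from $\sum_a E_x^a=\uno$, $\sum_b F_y^b=\uno$ and $\tr\rho=1$. Hence
$$\langle G,P\rangle=\frac{1}{2N^2}\sum_{x,y}\sum_{a,b}P(a,b|x,y)+\frac{1}{2N^2L}\langle\tilde M,P\rangle=\frac12+\frac{1}{2N^2L}\langle\tilde M,P\rangle,$$
so $\big|\langle G,P\rangle-\tfrac12\big|=\frac{1}{2N^2L}|\langle\tilde M,P\rangle|$. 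Taking the supremum over $\mathcal P_c$, resp. over $\mathcal P_q$, gives $\beta(G)=\frac{1}{2N^2L}\omega(\tilde M)$ and $\beta^*(G)=\frac{1}{2N^2L}\omega^*(\tilde M)$, and therefore
$$\frac{\beta^*(G)}{\beta(G)}=\frac{\omega^*(\tilde M)}{\omega(\tilde M)}=\frac{\omega^*(M)}{\omega(M)}\geq\frac{\alpha}{16}.$$

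The argument is essentially bookkeeping, and the only place where a little care is warranted is the observation that the normalizing constant $\frac{1}{2N^2L}$ and the additive shift $\frac12$ appearing in the expression for $\langle G,P\rangle$ are the same for classical and for quantum strategies; this is exactly why the ratio is reproduced exactly at this step and no constant is lost beyond the factor $1/16$ inherited from Proposition~\ref{connection min-viol}. (One should keep in mind the degenerate case $\omega(M)=0$, excluded here since the hypothesis gives $\omega(M)\approx\|M\|_\epsilon>0$, in which both biases vanish and the conclusion is read trivially.)
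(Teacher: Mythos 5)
Your argument is correct and is exactly the paper's (the paper compresses it to ``it is very easy to check'' after displaying the same affine renormalization $G_{x,y}^{a,b}=\frac{1}{2N^2}+\frac{1}{2N^2L}\tilde M_{x,y}^{a,b}$): apply Proposition~\ref{connection min-viol} to get $\tilde M$ with $\omega^*(\tilde M)/\omega(\tilde M)\geq\alpha/16$, then use $\sum_{a,b}P(a,b|x,y)=1$ to see that each bias of $G$ equals $\frac{1}{2N^2L}$ times the corresponding value of $\tilde M$. One caveat: your side remark that $\omega(\tilde M)=\omega(M)$ is both unnecessary and shakily justified --- a $(K+1)$-output strategy does not ``restrict'' to a valid $K$-output one (the extra output lets a player opt out of some inputs, which for sign-indefinite $M$ could in principle increase the value) --- so the conclusion should be read directly as $\beta^*(G)/\beta(G)=\omega^*(\tilde M)/\omega(\tilde M)\geq\alpha/16$, without the detour through $\omega^*(M)/\omega(M)$.
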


In this paper we use the notion of the \emph{conjugate space}.
If $X$ is an operator space, then $\overline{X}$
is the same space, but with conjugate multiplication. More specifically,
denote by $\overline{x}$ the element of $x \in X$ when considered as
sitting in $\overline{X}$. Then $\overline{\lambda} \cdot \overline{x}
= \overline{\lambda x}$. Thus, the map $X \to \overline{X} : x \mapsto \overline{x}$
is an antilinear isometry. If $X \subseteq \mathcal B(\h)$, its
\emph{conjugate} operator space structure is given by the embedding
$\overline{X} \subseteq \overline{\mathcal B(\h)} = \mathcal B(\overline \h)$.
We can therefore describe the operator space structure on $\overline{X}$ via
$$
\left\| \sum_i a_i \otimes \overline{x_i} \right\|_{M_d (\overline{X})} =
\left\| \sum_i \overline{a_i} \otimes x_i \right\|_{M_d (X)}
  \text{      }\text{   for every $d$.}
$$
If $X$ consists of matrices $(a_{ij})_{i,j=1}^\infty \in B(\ell_2)$
(with respect to a certain basis of $\ell_2$),
then we can view elements of $\overline{X}$ as matrices $(\overline{a_{ij}})$. We refer the reader to \cite[Section 2.9]{PiOS} for more information.

In general, the formal identity $X \to \overline{X}$ need not be a complete isometry.
However, for the case we are interested in, $X= \ell_\infty^N(\ell_1^K)$, the formal identification of bases yields a linear complete isometry.
Indeed, let $(e_i)_{i=1}^N$ and $(f_j)_{j=1}^K$ be the bases of
$\ell_\infty^N$ and $\ell_1^K$ respectively, then $e_i \otimes f_j$
is the ``canonical'' basis of $X$.
For $d \times d$ matrices $a_{ij}$, \eqref{eq:infty(1)} gives
\begin{align*}
&\Big\| \sum_{i=1}^N\sum_{i=1}^K a_{ij} \otimes e_i \otimes f_j \Big\|_{M_d(X)}=
\max_{i=1,\cdots, N} \sup \Big\{ \Big\| \sum_{j=1}^K a_{ij} \otimes \overline{u_{ij}} \Big\|_{M_{d^2}} :
  \|u_{ij}\|_{M_N} \leq 1 \Big\}\\&
 =\max_{i=1,\cdots, N} \sup \Big\{ \Big\| \sum_{j=1}^K \overline{a_{ij}} \otimes u_{ij} \Big\|_{M_{d^2}} :
 \|u_{ij}\|_{M_d} \leq 1 \Big\}=\Big\| \sum_{i=1}^N\sum_{i=1}^K \overline{a_{ij}} \otimes e_i \otimes f_j \Big\|_{M_d(X)}\\&=
 \Big\| \sum_{i=1}^N\sum_{i=1}^K a_{ij} \otimes \overline{e_i \otimes f_j} \Big\|_{M_d(\overline{X})}.
 \end{align*}

A similar computation shows that the basis $(e_i \otimes f_j)$
is $1$-completely unconditional; that is,
$$
\Big\| \sum_{i=1}^N\sum_{i=1}^K a_{ij} \otimes \alpha_{ij} e_i \otimes f_j \Big\|_{M_d(X)}=
\Big\| \sum_{i=1}^N\sum_{i=1}^K a_{ij} \otimes e_i \otimes f_j \Big\|_{M_d(X)} 
$$
whenever $a_{ij} \in M_d$ and $|\alpha_{ij}| = 1$ for any $i, j$.

As $(\overline{X})^* = \overline{X^*}$ (with conjugate action
$\langle \overline{x^*}, \overline{x} \rangle = \overline{\langle x^*, x \rangle}$),
the two preceding statements hold for the space $X = \ell_1^N(\ell_\infty^K)$
(the dual of $\ell_\infty^N(\ell_1^K)$) as well.
More precisely, if $a_{ij} \in M_d$, and $|\alpha_{ij}| = 1$ for any $(i,j)$, then
$$
\Big\| \sum_{i=1}^N\sum_{i=1}^K a_{ij} \otimes \alpha_{ij} e_i \otimes f_j \Big\|_{M_d(X)}=
\Big\| \sum_{i=1}^N\sum_{i=1}^K a_{ij} \otimes e_i \otimes f_j \Big\|_{M_d(X)} =
\Big\| \sum_{i=1}^N\sum_{i=1}^K a_{ij} \otimes \overline{e_i \otimes f_j} \Big\|_{M_d(\overline{X})} .
$$

Consequently, suppose $y_k \in \ell_1^N(\ell_\infty^K)$ are of the form
$y_k = \sum_{i,j} \alpha_{ij} e_i \otimes f_j$, where $\alpha_{ij}$
are real numbers. Then, for all $a_k \in M_d$, we have
\begin{equation}
\Big\| \sum_{k=1}^M a_k \otimes y_k \Big\|_{M_d(\ell_1^N(\ell_\infty^K))} =
 \Big\| \sum_{k=1}^M a_k \otimes \overline{y_k} \Big\|_{M_d(\overline{\ell_1^N(\ell_\infty^K)})} .
\label{eq:good norm}
\end{equation}
In a similar fashion, we can show that
\begin{equation}
\Big\| \sum_{k=1}^M a_k \otimes y_k \Big\|_{M_d \otimes_\eps \ell_1^N(\ell_\infty^K)} =
 \Big\| \sum_{k=1}^M a_k \otimes \overline{y_k} \Big\|_{M_d \otimes_\eps \overline{\ell_1^N(\ell_\infty^K)}} .
\label{eq:good norm eps}
\end{equation}

We also use the \emph{operator Hilbert space} $OH_N$, introduced by G.~Pisier in \cite{PiOH}.
On the Banach space level, it is the space $\ell_2^N$, with matrix norms given by
$$
\Big\|\sum_{i=1}^N x_i \otimes e_i\Big\|_{M_d(OH_N)}=
\Big\|\sum_{i=1}^N x_i\otimes \overline{x_i}\Big\|_{M_d(\overline{M_d})}^{1/2},
$$
where $(e_i)_{i=1}^N$ is an orthonormal basis of $\ell_2^N$.
In light of the above discussion on complex conjugation, we can view $\overline{x_i}$
as obtained from $x_i$ by entrywise complex cojugation, and
$\sum_{i=1}^N x_i\otimes \overline{x_i}$ as a $d^2 \times d^2$ matrix.
Hence, if each matrix $x_i$
has real entries (in a certain basis), then
$$
\Big\|\sum_{i=1}^N x_i \otimes e_i\Big\|_{M_d(OH_N)}=
\Big\|\sum_{i=1}^N x_i\otimes x_i\Big\|_{M_{d^2}} .
$$

 One can check that the canonical isometric identification $\ell_2^N\simeq\overline{(\ell_2^N)^*}$
at the Banach space level induces a complete isometry from $OH_N$ to $\overline{OH_N^*}$.
In fact, $OH_N$ is the unique operator space with this property, up to complete isometries.

Let us consider a linear map $v:\ell_2^N\rightarrow X$, where $X$ is a Banach space. Let us also fix an orthonormal basis $(\theta_i)_{i=1}^N$ of $\ell_2^N$. Then, it is very easy to see that 
\begin{equation}
\|v\|^2 =
\sup \Big\{ \sum_{i=1}^N\big| \langle f , v \theta_i \rangle \big|^2 : f \in B_{X^*} \Big\} ,
\label{eq:||v||}
\end{equation}
By the definition of the injective tensor product of Banach spaces,
\begin{equation}
\begin{split}
\Big\|\sum_{i=1}^Nv(\theta_i)\otimes \overline{v(\theta_i)}\Big\|_{X\otimes_{\eps} \overline{X}}
&
=
\sup \Big\{ \Big| \sum_{i=1}^N \langle f , v \theta_i \rangle
 \langle \overline{g} , \overline{v \theta_i} \rangle \Big| : f, g \in B_{X^*} \Big\}
\\
&
=
\sup \Big\{ \Big| \sum_{i=1}^N\langle f , v \theta_i \rangle
 \overline{\langle g , v \theta_i \rangle} \Big| : f, g \in B_{X^*} \Big\} .
\end{split}
\label{eq:||v||2}
\end{equation}
Combining H\"older Inequality with \eqref{eq:||v||}, we obtain:
$$
\Big\|\sum_{i=1}^Nv(\theta_i)\otimes \overline{v(\theta_i)}\Big\|_{X\otimes_{\eps} \overline{X}} \leq
\sup \Big\{ \sum_{i=1}^N \big| \langle f , v \theta_i \rangle \big|^2 : f \in B_{X^*} \Big\}\leq \|v\|^2 .
$$
Moreover, plugging $f = g$, for which the supremum in \eqref{eq:||v||} is attained, into
\eqref{eq:||v||2}, we show $$\displaystyle \|v\|^2 =
 \Big\|\sum_{i=1}^Nv(\theta_i)\otimes \overline{v(\theta_i)}\Big\|_{X\otimes_{\eps} \overline{X}}.$$

The previous paragraph proves the first point of the following proposition. The proof of the second part can be found in \cite[Proposition 7.2]{PiOS}.

\begin{prop}\label{OH prop}
Suppose $(\theta_i)_{i=1}^N$ is an orthonormal basis in $OH_N$, $X$ is an operator space,
and $v:OH_N\rightarrow X$ is a linear map. Then,
\begin{enumerate}
\item 
$\displaystyle \|v\|^2 =
 \Big\|\sum_{i=1}^Nv(\theta_i)\otimes \overline{v(\theta_i)}\Big\|_{X\otimes_{\eps} \overline{X}}$.
\item
$\displaystyle \|v\|_{cb}^2 =
 \Big\|\sum_{i=1}^Nv(\theta_i)\otimes \overline{v(\theta_i)}\Big\|_{X\otimes_{min} \overline{X}}$.
\end{enumerate}
\end{prop}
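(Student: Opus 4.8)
The plan is to establish the two identities separately: part~(1) is a statement about the injective tensor product at the Banach-space level, while part~(2) is its completely bounded refinement, which I would deduce from the two defining structural features of Pisier's operator Hilbert space.

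For part~(1), I would begin with the elementary remark that for any linear map $v\colon\ell_2^N\to X$ and any orthonormal basis $(\theta_i)_{i=1}^N$ of $\ell_2^N$ one has
$$
\|v\|^{2}\;=\;\sup_{f\in B_{X^{*}}}\|v^{*}f\|_{\ell_2^N}^{2}\;=\;\sup_{f\in B_{X^{*}}}\sum_{i=1}^{N}\big|\langle f,v\theta_i\rangle\big|^{2},
$$
which is nothing but the self-duality $(\ell_2^N)^{*}=\ell_2^N$ combined with Parseval's identity for $v^{*}f$; note that only the Banach-space structure of $X$ is used here. I would then unwind the definition of the injective norm on $X\otimes_{\epsilon}\overline X$: testing $\sum_i v(\theta_i)\otimes\overline{v(\theta_i)}$ against $f\in B_{X^{*}}$ and $\overline g\in B_{(\overline X)^{*}}=B_{\overline{X^{*}}}$ and using $\langle\overline g,\overline{v\theta_i}\rangle=\overline{\langle g,v\theta_i\rangle}$, one gets
$$
\Big\|\sum_{i=1}^{N}v(\theta_i)\otimes\overline{v(\theta_i)}\Big\|_{X\otimes_{\epsilon}\overline X}\;=\;\sup_{f,g\in B_{X^{*}}}\Big|\sum_{i=1}^{N}\langle f,v\theta_i\rangle\,\overline{\langle g,v\theta_i\rangle}\Big|.
$$
The Cauchy--Schwarz inequality in the summation index~$i$, followed by the first display, bounds the right-hand side by $\|v\|^{2}$; conversely, putting $g=f$ equal to a near-maximizer of the first display recovers $\|v\|^{2}$, giving equality. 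This is precisely the computation carried out in the paragraph preceding the statement.

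For part~(2), I would reduce the completely bounded version to a composition estimate, exploiting that $OH_N$ is finite-dimensional (so that $\|v^{*}\|_{cb}=\|v\|_{cb}$), that $OH_N^{*}\cong\overline{OH_N}$ completely isometrically, and that conjugation $X\to\overline X$ is a complete isometry. Under the identification $X\otimes_{\min}\overline X=CB(X^{*},\overline X)$, the element $\sum_i v(\theta_i)\otimes\overline{v(\theta_i)}$ corresponds, after tracking the conjugations, to the composition of $v^{*}\colon X^{*}\to OH_N^{*}=\overline{OH_N}$ with the conjugation isometry $\overline{OH_N}\to OH_N$, with $v\colon OH_N\to X$, and with the conjugation isometry $X\to\overline X$; multiplicativity of the cb-norm then yields the inequality $\leq\|v\|_{cb}^{2}$. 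The reverse inequality is the genuine content, and it is exactly here that the special geometry of $OH_N$ is needed: one must check that the diagonal tensor $\sum_i\theta_i\otimes\overline{\theta_i}$ already detects the full cb-norm of $v$, which is a consequence of the defining property of $OH_N$ recorded via the matrix-norm formula in the excerpt. Since this is established in \cite[Proposition~7.2]{PiOS}, I would simply invoke it rather than reprove it.

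The only delicate point, beyond the cited part~(2), is the bookkeeping of complex conjugations when moving among $\overline X$, $\overline{X^{*}}$ and $(\overline X)^{*}$; once the identification $(\overline X)^{*}=\overline{X^{*}}$ with pairing $\langle\overline{x^{*}},\overline x\rangle=\overline{\langle x^{*},x\rangle}$ is pinned down, part~(1) is a two-line Cauchy--Schwarz argument, and I expect essentially all the real difficulty to lie in the reverse inequality of part~(2).
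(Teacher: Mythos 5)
Your argument for part (1) is exactly the computation the paper carries out in the paragraph preceding the statement (the Parseval identity for $\|v\|^2$, the injective-norm formula, Cauchy--Schwarz for the upper bound, and $g=f$ for the lower bound), and for part (2) you invoke the same reference, \cite[Proposition 7.2]{PiOS}, that the paper cites. The proposal is correct and follows essentially the same route as the paper.
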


The following version of the preceding result follows directly from the earlier
discussion on complex conjugation.
\begin{corollary}\label{OH real}
Suppose $(\theta_i)_{i=1}^N$ is an orthonormal basis in $OH_N$,
and $v:OH_N\rightarrow M_d$ is a linear map, so that for any $1 \leq i \leq N$,
$v(\theta_i)$ is a real linear combination of matrix units. Then,
$$\displaystyle \|v\|^2 =
 \Big\|\sum_{i=1}^N v(\theta_i)\otimes v(\theta_i)\Big\|_{M_{d^2}}.$$
\end{corollary}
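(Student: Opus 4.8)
The plan is to derive Corollary \ref{OH real} directly from part (1) of Proposition \ref{OH prop} by unwinding the definition of the conjugate operator space $\overline{M_d}$ and using the hypothesis that each $v(\theta_i)$ has real coordinates in the matrix-unit basis. First I would apply Proposition \ref{OH prop}(1) with $X = M_d$, which gives
$$
\|v\|^2 = \Big\| \sum_{i=1}^N v(\theta_i) \otimes \overline{v(\theta_i)} \Big\|_{M_d \otimes_\eps \overline{M_d}}.
$$
So the whole content of the corollary is the identification of the right-hand side with $\big\| \sum_i v(\theta_i) \otimes v(\theta_i) \big\|_{M_{d^2}}$ under the reality assumption.

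Next I would recall from the earlier discussion on complex conjugation (the paragraph introducing $\overline{X}$ and the formula for matrix norms on $\overline{X}$) that for matrices with real entries the conjugation map is entrywise complex conjugation, hence $\overline{v(\theta_i)} = v(\theta_i)$ as elements of the underlying matrix algebra when $v(\theta_i)$ is a real linear combination of matrix units. The only subtlety is that $\overline{M_d}$ carries the \emph{conjugate} operator space structure, not the original one; however, since we are taking an injective (Banach-space) tensor norm $\otimes_\eps$, only the underlying Banach space matters here, and the Banach space underlying $\overline{M_d}$ is literally $M_d$ via the antilinear isometry $a \mapsto \overline{a}$. Thus $\big\| \sum_i v(\theta_i) \otimes \overline{v(\theta_i)} \big\|_{M_d \otimes_\eps \overline{M_d}} = \big\| \sum_i v(\theta_i) \otimes v(\theta_i) \big\|_{M_d \otimes_\eps M_d}$, and since $M_d \otimes_\eps M_d = B(S_1^d, M_d) = M_d(M_d) = M_{d^2}$ isometrically (the injective tensor product of $\ell_\infty$-type matrix spaces, or simply because the $\eps$-norm on $M_d \otimes M_d$ coincides with the operator norm on $M_{d^2}$ — this is the minimal tensor norm of C*-algebras restricted to finite dimensions), we obtain exactly
$$
\|v\|^2 = \Big\| \sum_{i=1}^N v(\theta_i) \otimes v(\theta_i) \Big\|_{M_{d^2}}.
$$

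I expect no serious obstacle; the result is genuinely a corollary. The one point deserving care is making sure that replacing $\overline{v(\theta_i)}$ by $v(\theta_i)$ is legitimate: this works precisely because the tensor norm in Proposition \ref{OH prop}(1) is the \emph{injective Banach-space} norm $\otimes_\eps$ (which ignores operator-space structure and sees $\overline{M_d}$ as the same Banach space as $M_d$), whereas the analogous replacement in part (2) would require the harder fact that the formal identity $M_d \to \overline{M_d}$ is a complete isometry — true for $M_d$ but not needed here. So the proof is essentially: invoke Proposition \ref{OH prop}(1), observe $\overline{v(\theta_i)} = v(\theta_i)$ under the reality hypothesis, and note $M_d \otimes_\eps M_d = M_{d^2}$ isometrically.
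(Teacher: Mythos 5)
The step that fails is the identification ``$M_d \otimes_\eps M_d = B(S_1^d,M_d) = M_d(M_d) = M_{d^2}$ isometrically''. The first equality is correct: the Banach-space injective tensor product gives $B(S_1^d,M_d)$ with the \emph{bounded}-operator norm. But $M_{d^2} = M_d \otimes_{\min} M_d$ is $CB(S_1^d,M_d)$ with the \emph{completely} bounded norm, and these differ (by a factor as large as $d$); the injective Banach tensor norm of two $C^*$-algebras is not the minimal $C^*$-norm unless one factor is commutative. Concretely, for $u=\sum_{i,j}e_{ij}\otimes e_{ij}$ the associated map $S_1^d\to M_d$ is the formal identity, so $\eps(u)=1$, while $u=\xi\xi^*$ with $\xi=\sum_i e_i\otimes e_i$, so $\|u\|_{M_{d^2}}=d$. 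Thus starting from Proposition \ref{OH prop}(1) you can only reach $\|v\|^2=\big\|\sum_i v(\theta_i)\otimes v(\theta_i)\big\|_{M_d\otimes_\eps M_d}$, which is not the stated right-hand side.

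In fact this same example (send an orthonormal basis of $OH_{d^2}$ to the $d^2$ matrix units, giving $\|v\|=1$ but right-hand side $d$) shows that the corollary cannot be a consequence of part (1) at all: since $\|\cdot\|_{M_{d^2}}$ is the minimal norm, the corollary is the real-coefficient version of Proposition \ref{OH prop}(2), and the left-hand side should be read as $\|v\|_{cb}^2$ (this is also how it is used later, to evaluate $\|ST\|_{cb}$ in the proof of Proposition \ref{p:from OH}; the printed $\|v\|^2$ is a typo). The intended one-line argument is therefore: apply Proposition \ref{OH prop}(2) with $X=M_d$, and then use that the linear identification $\overline{M_d}\to M_d$ sending $\overline{a}$ to the entrywise conjugate of $a$ is a \emph{complete} isometry fixing real matrices, so that $\big\|\sum_i v(\theta_i)\otimes\overline{v(\theta_i)}\big\|_{M_d\otimes_{\min}\overline{M_d}}=\big\|\sum_i v(\theta_i)\otimes v(\theta_i)\big\|_{M_{d^2}}$. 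This complete-isometry fact is exactly the point you set aside as ``not needed''; it is the crux. Your observation that $\overline{v(\theta_i)}=v(\theta_i)$ under the reality hypothesis is correct and is the other half of the paper's argument.
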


Here and below, the word ``matrix units (in $M_d$)'' refers to $d \times d$
matrices, in which $1$ entry is $1$, and other entries vanish (with respect to
a certain fixed orthonormal basis of $\ell_2^n$). Clearly the matrix units
form a basis for $M_d$.

\section{Vector-valued empirical method}\label{s:empirical}
In this section we prove a vector-valued version of Schechtman's \emph{empirical method} \cite{Sch87}. 
\begin{prop}\label{p:reduction}
Let $E$ be an $m$-dimensional subspace of $L_r(\mu,X)$, where $X$ is a Banach space, $(\Omega,\mu)$ is a probability space and $1 \leq r < \infty$. For a given $\epsilon \in (0,1/2)$ there exists a constant $C(\vr)$ such that if we consider $n = \lceil C(\vr) m^{1+r} \rceil$, then $\ell_r^n(X)$ contains a subspace $E^\prime$ which is $(1+\vr)$-isomorphic to $E$. We can take $C(\vr) = C_0 \vr^{-2} \log(\vr^{-1})$ for a universal constant $C_0$.
\end{prop}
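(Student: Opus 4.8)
The plan is to prove the statement by a standard Schechtman-style empirical (random sampling) argument, now carried out with values in the Banach space $X$. Fix a basis $(v_1,\dots,v_m)$ of $E$; each $v_k$ is an element of $L_r(\mu,X)$, i.e.\ a function $\om \mapsto v_k(\om) \in X$. The natural map to consider is, for each sample point $\om \in \Omega$, the evaluation vector $\Phi(\om) = (v_1(\om),\dots,v_m(\om))$, or rather its effect on a generic element $z = \sum_k a_k v_k \in E$: one has $\|z\|_{L_r(\mu,X)}^r = \int_\Omega \|\sum_k a_k v_k(\om)\|_X^r \, d\mu(\om) = \expe_\om \big\| \sum_k a_k v_k(\om)\big\|_X^r$. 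We draw $\om_1,\dots,\om_n$ independently according to $\mu$ and define $E' \subseteq \ell_r^n(X)$ as the span of the sampled sequences $w_k = (v_k(\om_1),\dots,v_k(\om_n))$, normalized by $n^{-1/r}$; then $n^{-1}\sum_{\ell=1}^n \|\sum_k a_k v_k(\om_\ell)\|_X^r$ is an empirical average of the random variable whose mean is exactly $\|z\|^r$. The whole game is to show this empirical average concentrates around its mean, uniformly over all $z$ in the unit sphere of $E$.

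The key steps, in order, are as follows. First, reduce to a net: by compactness of the unit sphere $S_E$ of the finite-dimensional space $E$, pick a $\delta$-net $\mathcal N$ (in, say, the $L_r(\mu,X)$ norm) of cardinality at most $(C/\delta)^m$, with $\delta$ a small multiple of $\vr$. Second, pointwise concentration: for a fixed $z \in \mathcal N$ with $\|z\| = 1$, the random variables $Y_\ell = \|\sum_k a_k v_k(\om_\ell)\|_X^r$ are i.i.d., nonnegative, with mean $1$; here one must be slightly careful because they are unbounded, so either truncate, or — the cleaner route — observe that $Y_\ell^{1/r} = \|z(\om_\ell)\|_X$ and use that the map $\om \mapsto z(\om)$ lies in $L_r$, applying a Bernstein/Rosenthal-type inequality for sums of i.i.d.\ random variables with bounded $L_r$ (equivalently $L_1$ after the $r$-th power, but one needs a bit more) norm, or simply Chebyshev at the level of the $r$-th moment combined with the classical Schechtman trick of first replacing $L_r$ by $L_r$ restricted to a large-probability set. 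The correct dependence $n \sim m^{1+r}$ comes precisely from this moment estimate: the fluctuation of an empirical average of $n$ i.i.d.\ copies of an $L_r$-bounded random variable is controlled at scale $\vr$ with probability $1 - e^{-c\vr^2 n / m^{r-1}}$-ish, and to beat the net cardinality $e^{Cm\log(1/\vr)}$ one needs $n \gtrsim \vr^{-2}\log(\vr^{-1}) \, m^{1+r}$, which is exactly $C(\vr) m^{1+r}$ with $C(\vr) = C_0 \vr^{-2}\log(\vr^{-1})$. Third, union bound over $\mathcal N$ to get, with positive probability, simultaneous $(1\pm\vr/2)$-control of all net points. Fourth, propagate from the net to the whole sphere: a standard successive-approximation argument — writing an arbitrary unit vector as a rapidly converging series of net elements — upgrades the $\delta$-net estimate to a $(1+\vr)$-isomorphism bound on all of $S_E$, provided $\delta$ was chosen small relative to $\vr$; note this step uses the triangle inequality in $\ell_r^n(X)$ and the already-established two-sided bound on net points. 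Finally, for such a good sample, the map $v_k \mapsto n^{-1/r} w_k$ extends to a $(1+\vr)$-isomorphism $E \to E' \subseteq \ell_r^n(X)$.

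The main obstacle I expect is the pointwise concentration step, specifically handling the \emph{unboundedness} of the random variables $\|z(\om)\|_X^r$: unlike the scalar $L_\infty$ case where Hoeffding applies directly, here we only know a uniform bound on the $r$-th moment (the $L_r$-norm of $z$ is $1$ on the sphere), which is exactly the information needed to get the $m^{1+r}$ rather than $m^{1+\vr'}$ dimension, and one must squeeze the concentration out of that alone. The clean fix, following Schechtman, is a two-stage truncation: split $z(\om) = z(\om)\uno_{\{\|z(\om)\|_X \le R\}} + z(\om)\uno_{\{\|z(\om)\|_X > R\}}$ with $R$ a polynomial-in-$m$ threshold; the tail part contributes little to the $L_r$ norm by Markov (using that the family $\{z : \|z\|=1, z \in E\}$ is, after passing to the net, effectively finite, so one can even get a \emph{uniform} tail bound), while the bounded part obeys Bernstein's inequality with the right variance proxy $\sim R^{2r-2}$ per term, giving failure probability $\exp(-c\vr^2 n / R^{2r-2})$; balancing $R$ against the moment constraint and the net size yields the stated $n$. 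The other points — building the net, the union bound, and the successive-approximation pass from net to sphere — are routine and identical to the scalar empirical method, with $|\cdot|$ replaced by $\|\cdot\|_X$ throughout.
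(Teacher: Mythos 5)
Your overall architecture (sample i.i.d.\ points, concentrate on a net, union bound, successive approximation from the net to the sphere) matches the paper's, and those routine steps are fine. The genuine gap is in your treatment of the unboundedness of $\|z(\omega)\|_X^r$, which is the one nontrivial point of the whole argument. Your proposed fix --- truncate at a polynomial-in-$m$ threshold $R$ and control the tail ``by Markov'' --- does not work: Markov bounds $\mu(\{\|z(\cdot)\|_X>R\})$ by $R^{-r}$, but gives no control whatsoever on the tail's contribution $\int_{\{\|z(\cdot)\|_X>R\}}\|z(\omega)\|_X^r\,d\mu$ to the norm. Already for $X=\R$, $r=1$, $m=1$ and $E$ spanned by $e=\delta^{-1}\uno_A$ with $\mu(A)=\delta$ arbitrarily small, the unit vector $e$ has \emph{all} of its $L_1$-mass above any fixed threshold, and with probability $(1-\delta)^n$ every sample point misses $A$, so the empirical norm collapses to $0$. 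Passing to a finite net does not help, since the net points themselves can be of this form. The condition $\|z\|_{L_r}=1$ on the sphere controls only the \emph{mean} of your $Y_\ell$, not its variance, essential supremum, or uniform integrability, so no concentration can be squeezed out of it alone.

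The missing idea is the change of density that the paper performs first (Lemma \ref{l:inf norm}): take a normalized Auerbach basis $(e_i)_{i=1}^m$ of $E$, set $\phi=m^{-1}\sum_{i=1}^m\|e_i(\cdot)\|_X^r$ and $d\nu=\phi\,d\mu$, and replace each $e\in E$ by $\phi^{-1/r}e$. This is an isometry of $E$ into $L_r(\nu,X)$, and the Auerbach property combined with H\"older forces $\|Se\|_{L_\infty(\nu,X)}\le m\,\|Se\|_{L_r(\nu,X)}$ for every $e\in E$. Only after this step are the variables $y_i=\|e(t_i)\|_X^r-1$ genuinely bounded (by $m^r$), so that the Bernstein-type deviation inequality (Lemma \ref{deviation inequality}) gives failure probability $\exp(-c\eta^2 n/m^r)$ per net point; beating the net cardinality $(12/\vr)^m$ then yields exactly $n\gtrsim \vr^{-2}\log(\vr^{-1})\,m^{1+r}$. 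Note also that the sampling must then be performed with respect to $\nu$ rather than $\mu$, which is why the resulting embedding into $\ell_r^n(X)$ carries weights (cf.\ Remark \ref{remark: map J}). With this change-of-density lemma inserted in place of your truncation step, the rest of your outline goes through essentially as written.
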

We will first prove an easy lemma which will make the proof of Proposition \ref{p:reduction} simpler. To this end, let us consider a normalized Auerbach basis $(e_i)_{i=1}^m$ in $E$. Note that we clearly have $\|e_i\|_E^r=\int_\Omega \|e_i(t)\|_X^r \, d\mu(t) = 1$ for every $i=1,\cdots, m$. Let us define the function $\phi(t) = m^{-1} \sum_{i=1}^m \|e_i(t)\|_X^r$ for every $t\in \Omega$. It is then obvious that $\int_0^1 \phi(t) \, dt = 1$, so that $d\nu = \phi \, d\mu$ defines a probability measure on $\Omega$.
\begin{lemma}\label{l:inf norm}
Let us define the linear map $S : E \to L_r(\nu,X)$ by  $S(e)=\phi^{-1/r} e$ for every $e\in E$ with the convention
$\frac{0}{0}= 0$. Then, $S$ is an isometry and, moreover, for any $e \in E$ we have $$\|S e\|_{L_\infty(\nu,X)} \leq
m \|e\|_{L_r(\mu,X)} = m \|S e\|_{L_r(\nu,X)}.$$
\end{lemma}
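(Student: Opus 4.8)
The plan is to verify the two assertions directly from the definition of $S$ and the defining properties of the Auerbach basis, the only care needed being the treatment of the null set where $\phi$ vanishes. First I would check that $S$ is a well-defined linear isometry. On $\{\phi = 0\}$ every $\|e_i(t)\|_X$ vanishes, hence $e(t) = 0$ there for every $e\in E$ (as $e$ is a linear combination of the $e_i$), so the convention $0/0 = 0$ is consistent; away from that set, $\phi^{-1/r}e$ is well defined. A change-of-density computation then gives
$$\|Se\|_{L_r(\nu,X)}^r = \int_\Omega \phi(t)^{-1}\|e(t)\|_X^r\,\phi(t)\,d\mu(t) = \int_\Omega \|e(t)\|_X^r\,d\mu(t) = \|e\|_{L_r(\mu,X)}^r,$$
which proves that $S$ is an isometry and, in particular, yields the claimed equality $m\|e\|_{L_r(\mu,X)} = m\|Se\|_{L_r(\nu,X)}$.

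For the $L_\infty$ bound, write $e = \sum_{i=1}^m c_i e_i$. Since $(e_i)_{i=1}^m$ is a normalized Auerbach basis, the biorthogonal functionals have norm one, so $|c_i| \leq \|e\|_E = \|e\|_{L_r(\mu,X)}$ for each $i$. Then, for $t$ with $\phi(t)\neq 0$,
$$\|(Se)(t)\|_X = \phi(t)^{-1/r}\Big\|\sum_{i=1}^m c_i e_i(t)\Big\|_X \leq \|e\|_{L_r(\mu,X)}\,\phi(t)^{-1/r}\sum_{i=1}^m \|e_i(t)\|_X.$$
The \emph{key step} is to bound $\phi(t)^{-1/r}\sum_{i=1}^m\|e_i(t)\|_X$ by $m$. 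This is exactly H\"older's inequality in the form $\sum_{i=1}^m \|e_i(t)\|_X \leq m^{1-1/r}\big(\sum_{i=1}^m \|e_i(t)\|_X^r\big)^{1/r} = m^{1-1/r}(m\phi(t))^{1/r} = m\,\phi(t)^{1/r}$, after which the factor $\phi(t)^{-1/r}$ cancels. Substituting gives $\|(Se)(t)\|_X \leq m\|e\|_{L_r(\mu,X)}$ for $\nu$-a.e. $t$ (the set $\{\phi=0\}$ being $\nu$-null and irrelevant anyway since $Se$ vanishes there), hence $\|Se\|_{L_\infty(\nu,X)} \leq m\|e\|_{L_r(\mu,X)}$.

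I do not expect a genuine obstacle here: the statement is elementary once the Auerbach property provides the uniform coefficient bound $|c_i|\leq\|e\|$ and H\"older's inequality provides the power-mean comparison, with the exponent bookkeeping $m^{1-1/r}\cdot m^{1/r} = m$ being the crux. The only point I would be careful to state cleanly is the handling of $\{\phi = 0\}$ and the $0/0$ convention, so that both the isometry identity and the pointwise estimate are justified on all of $\Omega$.
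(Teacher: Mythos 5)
Your proof is correct and follows essentially the same route as the paper: well-definedness via the observation that $e(t)=0$ wherever $\phi(t)=0$, the change-of-density computation for the isometry, and the Auerbach coefficient bound combined with H\"older's inequality (your $m^{1-1/r}\cdot m^{1/r}=m$ is the paper's $m^{1/r'}\cdot m^{1/r}=m$). The only cosmetic difference is that you pull out $\max_i|c_i|$ before applying H\"older, while the paper keeps the coefficients inside the H\"older estimate; the bookkeeping is identical.
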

\begin{proof}
Note that, if $\phi(t) = 0$ for some $t$, then $e(t) = 0$ for every $e \in E$. Thus, $S$ is well defined. The linearity of $S$ is obvious. On the other hand, for a given $e\in E$ we have 
\begin{align*}
\|Se\|_{L_r(\nu,X)}^r=\int_\Omega \|\phi^{-1/r}(t) e(t)\|_X^r \phi(t) \, d\mu(t)=\|e\|_{L_r(\mu,X)}^r. 
\end{align*}Hence, $S$ is indeed an isometry. Finally, given a norm one element $e=\sum_{i=1}^m\alpha_i e_i \in E$, the fact that $(e_i)_{i=1}^m$ is an Auerbach basis implies that $\max_i |\alpha_i| \leq 1$. Hence,
\begin{align*}
\|(S e)(t)\|_X = \phi^{-1/r}(t) \|e(t)\|_X =
m^{1/r} \Big( \sum_{i=1}^m \|e_i(t)\|_X^r \Big)^{-1/r}
\Big\| \sum_{i=1}^m \alpha_i e_i(t) \Big\|_X .
\end{align*}
Let $r^\prime = \frac{r}{r-1}$ so that $\frac{1}{r}+ \frac{1}{r'} = 1$. By H\"older Inequality,
\begin{align*}
\Big\| \sum_{i=1}^m \alpha_i e_i(t) \Big\|_X\leq
\Big( \sum_{i=1}^m|\alpha_i|^{r^\prime} \Big)^{1/r^\prime}
\Big( \sum_{i=1}^m \|e_i(t)\|_X^r \Big)^{1/r}
\leq m^{1/r^\prime} \Big( \sum_{i=1}^m \|e_i(t)\|_X^r \Big)^{1/r}.
\end{align*}Hence, for almost every $t$,
\begin{align*}
\|(S e)(t)\|_X \leq m^{1/r} \Big( \sum_{i=1}^m \|e_i(t)\|_X^r \Big)^{-1/r}m^{1/r^\prime} \Big( \sum_{i=1}^m \|e_i(t)\|_X^r \Big)^{1/r}= m .
\end{align*}
\end{proof}
The following lemma is a standard large deviation inequality for sums of independent random variables. The proof can be found in \cite[Lemma 3]{Sch87}.
\begin{lemma}\label{deviation inequality}
Let $(y_i)_{i=1}^n$ be a family of independent random variables and let $A$ and $B$ be non-negative constants such that $\mathbb E y_i=0$, $\mathbb E |y_i|\leq A$ and $\|y_i\|_\infty\leq B$ for every $i=1,\cdots, n$. Then,
\begin{align*}
\mathbb P\Big(\Big|\sum_{i=1}^ny_i\Big|> c\Big)\leq 2\exp \big( -\frac{c^2}{4eABn} \big)
\end{align*}
for all $c\leq 2eAn$.
\end{lemma}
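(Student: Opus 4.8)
The plan is to prove this by the standard exponential-moment (Chernoff) method; the computation is routine Bernstein-type bookkeeping, and the only mildly delicate point is arranging the constants so that the single clean bound covers the whole advertised range $c \leq 2eAn$. We may assume $A,B>0$ (if either vanishes then every $y_i$ is $0$ a.s.\ and there is nothing to prove) and $c>0$, and we write $S_n=\sum_{i=1}^n y_i$.

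First I would estimate the exponential moment of a single $y_i$. Fix $\lambda>0$ with $\lambda B\leq 1$. Expanding $e^{\lambda y_i}$ as a power series (the interchange of sum and expectation is legitimate since $|y_i|\leq B$ a.s.), using $\mathbb E y_i=0$ to kill the linear term, and using the moment estimates $|\mathbb E y_i^k|\leq \mathbb E|y_i|^k\leq B^{k-1}\mathbb E|y_i|\leq AB^{k-1}$ for $k\geq 1$ (which follow from $\|y_i\|_\infty\leq B$ and $\mathbb E|y_i|\leq A$), one obtains
$$\mathbb E e^{\lambda y_i}\leq 1+\frac{A}{B}\sum_{k\geq 2}\frac{(\lambda B)^k}{k!}=1+\frac{A}{B}\big(e^{\lambda B}-1-\lambda B\big).$$
By the elementary inequality $e^x-1-x\leq \tfrac12 x^2 e^x$ for $x\geq 0$ and $\lambda B\leq 1$, the right-hand side is at most $1+\tfrac{e}{2}AB\lambda^2\leq \exp\!\big(\tfrac{e}{2}AB\lambda^2\big)$. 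By independence, $\mathbb E e^{\lambda S_n}=\prod_i \mathbb E e^{\lambda y_i}\leq \exp\!\big(\tfrac{e}{2}ABn\lambda^2\big)$.

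Next, Markov's inequality gives, for $0<\lambda\leq 1/B$,
$$\mathbb P(S_n> c)\leq e^{-\lambda c}\,\mathbb E e^{\lambda S_n}\leq \exp\!\Big(-\lambda c+\tfrac{e}{2}ABn\lambda^2\Big).$$
I would then split on the size of $c$. If $0<c\leq eAn$, the unconstrained minimizer $\lambda=c/(eABn)$ satisfies $\lambda\leq 1/B$ and yields $\mathbb P(S_n>c)\leq \exp\!\big(-c^2/(2eABn)\big)$. If $eAn<c\leq 2eAn$, take the suboptimal value $\lambda=1/B$ instead: then the exponent equals $-c/B+eAn/(2B)$, which is linear in $c$, while the target exponent $-c^2/(4eABn)$ is concave in $c$; checking that the former is $\leq$ the latter at the two endpoints $c=eAn$ and $c=2eAn$ and invoking concavity shows $\mathbb P(S_n>c)\leq \exp\!\big(-c^2/(4eABn)\big)$ on this half too. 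In all cases $\mathbb P(S_n>c)\leq \exp\!\big(-c^2/(4eABn)\big)$ for $0<c\leq 2eAn$. Finally, applying the same estimate to $(-y_i)_{i=1}^n$, which satisfy the identical hypotheses, and taking a union bound, yields $\mathbb P(|S_n|>c)\leq 2\exp\!\big(-c^2/(4eABn)\big)$.

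The main (really the only) obstacle is matching the stated constants: the factor $4$ in the denominator — rather than the $2$ produced by the optimal $\lambda$ — is exactly the slack needed to make the crude choice $\lambda=1/B$ work on the upper part $eAn<c\leq 2eAn$ of the range, and hence to obtain one uniform bound valid for all $c\leq 2eAn$. Everything else is the textbook Bernstein argument.
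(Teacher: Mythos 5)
Your argument is correct: the MGF bound $\mathbb{E}e^{\lambda y_i}\leq \exp(\tfrac{e}{2}AB\lambda^2)$ for $\lambda\leq 1/B$, the optimization/endpoint-concavity check covering the range $c\leq 2eAn$ (where the factor $4$ instead of $2$ absorbs the crude choice $\lambda=1/B$), and the union bound all check out. The paper itself gives no proof, citing only Schechtman's Lemma 3 in \cite{Sch87}, and your Chernoff--Bernstein argument is essentially that standard proof (one small simplification: the single choice $\lambda=c/(2eABn)$, admissible exactly when $c\leq 2eAn$, already yields exponent $-3c^2/(8eABn)\leq -c^2/(4eABn)$ and avoids the case split).
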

For the proof of Proposition \ref{p:reduction} we use some ideas from
\cite[Section 2]{BLM} and from \cite{Sch87}.
\begin{proof}[Proof of Proposition \ref{p:reduction}]
According to Lemma \ref{l:inf norm} we can assume that $\|e\|_{L_\infty(\mu, X)} \leq m \|e\|_{L_r(\mu, X)}$, for any $e \in E$. On the other hand, for every $t = (t_1, \ldots, t_n) \in \Omega^n$, we consider the linear map $T_t : L_r(\mu, X) \to L_r^n(X)$ defined by  $T_t(f) = (f(t_i))_{i=1}^n$.
Here, $L_r^n$ denotes the space $\mathbb R^n$ endowed with the norm $\|(\alpha_i)_{i=1}^n\|_{L_r^n}^r = n^{-1} \sum_{i=1}^n |\alpha_i|^r$.

Now, let $e \in E$ be any fixed element such that $\|e\|_E= 1$. For $1 \leq i \leq n$ we consider the random variable $y_i:\Omega^n\rightarrow \mathbb R$ defined by $y_i(t) = \|e(t_i)\|_X^r -1$ for every $t$. Then, $(y_i)_{i=1}^n$ is a family of independent random variables satisfying $\expe y_i = 0$, $\expe |y_i| \leq 2$ and $\|y_i\|_\infty \leq m^r$ for every $1 \leq i \leq n$. On the other hand, $\|T_t(e)\|_{L_r^n(X)}^r-1=n^{-1}\sum_{i=1}^ny_i(t)$ for every $t\in \Omega^n$. According to Lemma \ref{deviation inequality} it follows that 
\begin{align*}
\prob \Big(t\in \Omega^n: \big| \|T_t(e)\|_r^r - 1 \big| \geq c \Big) =
\prob \Big(t\in \Omega^n: \big| \sum_{i=1}^ny_i(t) \big| \geq c n \Big) \leq
2 \exp \big( - \frac{c^2 n}{8em^r} \big)
\end{align*}
for any $c \in (0,1)$. Note that, if $\big|\|T_t(e)\|_{L_r^n(X)}^r - 1 \big| \leq c$,
then $1 - c \leq \|T_t(e)\|_{L_r^n(X)} \leq 1 + c$.

Let $\eta = \vr/4$. Standard techniques allow us to find an $\eta$-net $\mathcal{N}$ in the unit sphere of $E$, with $|{\mathcal{N}}| \leq (\frac{3}{\eta})^m=(\frac{12}{\epsilon})^m$. If we consider the particular choice $c=\eta$, we have
\begin{align*}
2 \exp \big( - \frac{\eta^2 n}{8em^r} \big)|{\mathcal{N}}| \leq 2 \exp \big(m\log \frac{12}{\epsilon} - \frac{\eta^2 n}{8em^r} \big)<1,
\end{align*}where the last inequality follows from our choice $n = \lceil C(\vr) m^{1+r} \rceil$ with $C(\vr) = C_0 \vr^{-2} \log(\vr^{-1})$ for a certain universal constant $C_0$. This means that there is an strictly positive probability of having an element $t\in \Omega^n$ such that $1 - \eta \leq \|T_t(e)\|_{L_r^n(X)} \leq 1 + \eta$ for every $e\in \mathcal{N}$. We claim that  
\begin{align}\label{claim}
1 - \vr \leq \|T_t(e)\|_{L_r^n(X)} \leq 1 + \vr 
\end{align}for every $e$ in the unit sphere of $E$. If this is so, we conclude the proof by noting that the map $k:L_r^n(X)\rightarrow \ell_r^n(X)$ defined by $k\big((x_i)_{i=1}^n\big)=n^{-\frac{1}{r}}(x_i)_{i=1}^n$ is an isometry.

To prove claim (\ref{claim}) let us consider an arbitrary unit element $e$ and write it as $e=e_0+\sum_{n=1}^\infty a_n e_n$, with $e_n\in \mathcal{N}$ and $|a_n|\leq \eta^n$ for every $n\geq 0$. Then, we have that 
\begin{align*}
\Big|\|T_t(e)\|_{L_r^n(X)}-\|T_t(e_0)\|_{L_r^n(X)}\Big|\leq \|T_t(e-e_0)\|_{L_r^n(X)}\leq 
\sum_{n=1}^\infty\eta^n\|T_t(e_n)\|_{L_r^n(X)}\leq \frac{\eta}{1-\eta}(1 + \eta).
\end{align*}Then, one has $\frac{1-3\eta}{1-\eta}\leq \|T_t(e_n)\|_{L_r^n(X)}\leq \frac{1+\eta}{1-\eta}$ for every unit element $e$; from where one can deduce our claim by plugging $\eta = \vr/4$.
\end{proof}
\begin{remark}\label{remark: map J}
In this work we will be mostly interested in the space $\ell_1^N(X)$. Let us assume that we have an $m$-dimensional subspace $E\subset \ell_1^N(X)$. Proposition \ref{p:reduction} tells us that if we consider $n = \lceil C(\vr) m^2 \rceil$, then there exists a map $J:\ell_1^N\rightarrow \ell_1^n$ defined by some indices $i_1,\cdots, i_n\in \{1,\cdots, N\}$ and some positive numbers $\alpha_1,\cdots, \alpha_n$ such that $J(x_1,\cdots, x_N)= (\alpha_1x_{i_1},\cdots, \alpha_nx_{i_n})$ for every $(x_1,\cdots, x_N)\in \ell_1^N$ and such that $J\otimes id_X$ defines a $(1+\epsilon)$-isomorphism from $E$ to $\ell_1^n(X)$.
\end{remark}
In Section \ref{sec: improving empirical method} we will explain how
to improve Proposition \ref{p:reduction} if the subspace $E$ of $L_1(\mu, X)$
has some \emph{geometrical properties}. Since we will not use these results in our paper,
we have preferred to postpone this discussion to the end of the paper.
\section{Main result}\label{s:Main result}

In this section we will prove our main theorem. We will start re-proving the classical and
the quantum bounds for the Khot-Vishnoi game in the language of operator spaces.
In particular, we show that this game can be understood as a map factorizing through
a Hilbert space. This fact will be crucial in our analysis later.

Our results deal with operators from $OH$ into other spaces.
To recast the results in more familiar terms, we introduce some
notation. Suppose $X$ and $Y$ are normed spaces with bases $(x_i)$
and $(y_j)$ respectively. For any $v : X \to Y$, there exists a unique
family $(\alpha_{ij})$ so that $v x_i = \sum_j \alpha_{ij} y_j$.
We say that $v$ is \emph{real} (resp.~\emph{positive}) with respect to
these bases if, for any $i$ and $j$, we have $\alpha_{ij} \in \R$
(resp.~$\alpha_{ij} \geq 0$).
We often refer to the ``canonical'' basis of $\ell_1^N(\ell_\infty^n)$ consists of elements
$e_i \otimes f_j$, while that of $M_d$ -- of matrix units.

\begin{prop}\label{p:buhrman}
Suppose $n$ is a power of $2$, and let $N = \frac{2^n}{n}$. Then there exists an operator
$V : OH_{Nn} \to \ell_1^N(\ell_\infty^n)$, and a completely positive complete contraction
$U : \ell_1^N(\ell_\infty^n) \to M_n$, so that 
\begin{align*}
\|UV\|_{cb} =\Omega\Big(\frac{\sqrt{N}}{\log n}\Big)\text{     }\text{  and   }
 \text{     } \|V\| =O\Big(\sqrt{\frac{N}{n}}\Big).
\end{align*}
Moreover, $V$ is positive with respect to the canonical bases of its domain and
range, and $U$ is real with respect to the canonical bases.
\end{prop}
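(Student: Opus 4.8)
The plan is to reconstruct the Khot-Vishnoi construction in the operator-space language, exhibiting $V$ and $U$ explicitly and then reading off the two norm bounds. Recall that the Khot-Vishnoi game is built from the group $G = \{0,1\}^n$ acted on by the Hadamard-type noise operator; one partitions $G$ into $N = 2^n/n$ cosets of a Hadamard code $H$ (with $|H| = n$), picks a coset representative, and the questions to each player are the cosets, while the answers range over the $n$ elements of the chosen coset. First I would fix this combinatorial data and use it to define the two maps. The operator $V : OH_{Nn} \to \ell_1^N(\ell_\infty^n)$ should send the orthonormal basis vector indexed by $(x,j)$ (coset $x$, position $j$ within the coset) to a suitably weighted copy of $e_x \otimes f_j$ — up to normalization this is the natural ``inclusion'' of the index set, and it is manifestly positive with respect to the canonical bases. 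The map $U : \ell_1^N(\ell_\infty^n) \to M_n$ should be the one encoding the quantum strategy on the maximally entangled state in $\C^n \otimes \C^n$: on the block $e_x \otimes (\cdot)$ it acts by the POVM $\{E_x^a\}_a$ coming from the Fourier/character vectors of the Hadamard code, which makes $U$ completely positive and (after the right normalization) a complete contraction, and real because the relevant projections have real entries in the computational basis.

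Next I would compute $\|V\|$. By Proposition \ref{OH prop}(1) (or Corollary \ref{OH real}, since $V$ is real), $\|V\|^2 = \big\| \sum_{x,j} V(\theta_{x,j}) \otimes \overline{V(\theta_{x,j})} \big\|_{\ell_1^N(\ell_\infty^n) \otimes_\eps \overline{\ell_1^N(\ell_\infty^n)}}$, and using \eqref{eq:good norm eps} together with the explicit description of the norm on $\ell_1^N(\ell_\infty^n)$ (supremum over complete contractions $u_i : \ell_\infty^n \to B(H)$, which for the $\eps$-norm reduce to choices of sign vectors), this reduces to an elementary estimate: each of the $N$ blocks contributes the $\ell_\infty^n \otimes_\eps \ell_\infty^n$ norm of a rank-one-per-coordinate tensor, and summing over the $N$ blocks in the $\ell_1$ direction gives $\|V\|^2 = O(N/n)$ after accounting for the $1/n$ normalization built into $V$. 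This is the routine half.

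Then I would compute $\|UV\|_{cb}$ from below. Here I would invoke Proposition \ref{OH prop}(2): $\|UV\|_{cb}^2 = \big\| \sum_{x,j} UV(\theta_{x,j}) \otimes \overline{UV(\theta_{x,j})} \big\|_{M_n \otimes_{min} \overline{M_n}}$, and since $U$ and $V$ are real this is $\big\| \sum_{x,j} UV(\theta_{x,j}) \otimes UV(\theta_{x,j}) \big\|_{M_{n^2}}$ by Corollary \ref{OH real}. The vector $\sum_{x,j} UV(\theta_{x,j}) \otimes UV(\theta_{x,j})$, evaluated against the maximally entangled state, is exactly (a rescaling of) the quantum bias of the Khot-Vishnoi game, so the lower bound $\Omega(\sqrt{N}/\log n)$ is precisely the square root of the known quantum value bound from \cite{BRSW}/\cite{KhVi} (the $\sqrt{N}$ rather than $N$ because $\|UV\|_{cb}$ is a root of a quadratic quantity, and the $n/\log^2 n$ of \cite{BRSW} becomes $\sqrt{N}/\log n$ after translating $N = 2^n/n$). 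Concretely I would test the $M_{n^2}$-norm on the entangled unit vector $\Omega_n = n^{-1/2}\sum_k e_k \otimes e_k$ and show $\langle \Omega_n, (\sum UV(\theta)\otimes UV(\theta)) \Omega_n\rangle = \Omega(N/\log^2 n)$, using the standard Fourier-analytic/KKL estimate that underlies the Khot-Vishnoi lower bound. That KKL-type estimate — controlling the relevant sum of Fourier coefficients of the noise operator over the Hadamard code cosets — is the one genuinely substantive input, and I expect it to be the main obstacle; everything else is bookkeeping with the operator-space norms. I would either reproduce it or, more cleanly, cite the computation in \cite{BRSW} and simply verify that the quantity they bound coincides with $\langle \Omega_n, (\sum UV(\theta)\otimes UV(\theta))\Omega_n\rangle$ and that their classical bound likewise matches the relevant $\eps$-norm, so that the ratio $\|UV\|_{cb}/\|V\|$ reproduces their $\Omega(n/\log^2 n)$ after the $N = 2^n/n$ substitution.
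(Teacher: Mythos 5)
There is a genuine gap, and it sits in your definition of $V$. You propose to send the basis vector indexed by $(x,j)$ to ``a suitably weighted copy of $e_x\otimes f_j$'', i.e.\ a diagonal (weighted inclusion) map. That cannot work: if $V$ is diagonal, then $\sum_{x,j}V(\theta_{x,j})\otimes V(\theta_{x,j})$ is a diagonal tensor, not the Khot--Vishnoi tensor, and a direct check (test the min-norm on the maximally entangled state, where each diagonal term $p_x\otimes p_x$ contributes exactly $1/n$) shows that $\|UV\|_{cb}$ and $\|V\|$ then differ by at most a constant, so the ratio $\Omega(\sqrt{n}/\log n)$ is lost. The operator $V$ in the paper is the \emph{noise operator}: convolution $C_\mu$ with the biased product measure $\mu=\big(\tfrac{1+\sqrt{\vr}}{2}\delta_0+\tfrac{1-\sqrt{\vr}}{2}\delta_1\big)^{\otimes n}$, a full $2^n\times 2^n$ positive matrix, with the bias eventually tuned to $\vr\sim 1/\log n$. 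Correspondingly, your claim that $\|V\|=O(\sqrt{N/n})$ ``reduces to an elementary estimate'' block by block is not tenable for the correct $V$: the paper obtains it by factoring $V=i_p\,C_\mu\,j_2$ through $L_p(G)$ with $p=(1+\vr)/\vr$ and invoking the Bonami--Beckner hypercontractive inequality to make $C_\mu:L_2\to L_p$ contractive. This is the genuinely substantive analytic input, and it is absent from your outline.

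You have also misplaced where the hard work happens on the other side. The lower bound on $\|UV\|_{cb}$ via the maximally entangled state is the right move (and matches the paper), but it does not require a KKL-type estimate: once one writes $\sum_x V\delta_x\otimes V\delta_x=C_\nu$ with $\nu=\big(\tfrac{1+\vr}{2}\delta_0+\tfrac{1-\vr}{2}\delta_1\big)^{\otimes n}$ and uses $f(p_y\otimes p_z)=\big(1-2|y\ominus z|/n\big)^2/n$, the bound becomes a one-line binomial second-moment computation, $N\,\expe(1-2X/n)^2\geq N\vr^2$. (The KKL/hypercontractivity machinery in \cite{BRSW} serves the \emph{classical} upper bound, which in this operator-space reformulation is exactly the bound on $\|V\|$.) Your description of $U$ via the Hadamard projections $p_x$ is essentially correct, though the reason it is a complete contraction is not a normalization but the mutual orthogonality of the $h_x$ within each coset, which makes $U$ restricted to each $\ell_\infty^n$ block a $*$-homomorphism into $M_n$.
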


Denote by $(\delta_x)_{x\in \{0,1\}^n}$ the canonical orthonormal basis of 
$\ell_2^{Nn} = \ell_2(\{0,1\}^n)$.
Let
$$
M = \sum_x V \delta_x \otimes V \delta_x \in
 \ell_1^N(\ell_\infty^n) \otimes \ell_1^N(\ell_\infty^n) .
$$
By Proposition \ref{OH prop} and Corollary \ref{OH real}, \eqref{eq:good norm}, and \eqref{eq:good norm eps},
$\|M\|_{\eps} = \|V\|^2$, and $\|M\|_{min} = \|V\|_{cb}^2$.
The interested reader can check that the tensor $M$ has the form $N G_{KV}$, where $G_{KV}$
denotes the Khot-Vishnoi game (see \cite[Section 4]{BRSW} for a precise description;
in particular, $M$ has positive entries). 
\begin{proof}[Proof of Proposition \ref{p:buhrman}]
Consider the Cantor group $G = \{0,1\}^n$, and let $G_0$ be its Hadamard
subgroup, of cardinality $n$. Let $\Omega = G/G_0$ (then $|\Omega| = N$). It is important to note
(for future reference) that, for any distinct $x, y \in G_0$, $|x \ominus y| = 2^{n-1}$
(here $| \cdot |$ stands for the Hamming metric). Consider the probability measures
$\displaystyle \mu_0 = \frac{1+\svr}{2} \delta_0 + \frac{1-\svr}{2} \delta_1$ and
$\mu = \mu_0^{\otimes n}$, on $\{0,1\}$ and $G$ respectively (the number $\vr \in (0,1)$
will be specified later). The operator $V : \C^G \to \C^G$ is defined as the convolution
with the measure $\mu$: $Vf = C_\mu f = f * \mu$, or in other words,
$$
[Vf](x) = \sum_{y \in G}
 \left( \frac{1+\svr}{2} \right)^{n-|x \ominus y|} \left( \frac{1-\svr}{2} \right)^{|x \ominus y|} f(y)
$$
(all entries of the matrix representing $V$ are positive).
It is easy to see that Walsh functions are eigenvectors of $V$: $Vw_A = \vr^{|A|/2} w_A$.
For other properties of this operator, see e.g. \cite{W08}.
We view $V$ as acting from $OH_{Nn}$ to $\ell_1^N(\ell_\infty^n)$. The identification
of $OH_{Nn}$ with $\ell_2(G)$ is straightforward.
For the identification with $\ell_1^N(\ell_\infty^n)$,
given $g \in G$ we identify $\delta_g$ with $\delta_{[g]} \otimes \Phi_{[g]}(g)$, where for every $[g]$, $\Phi_{[g]}:[g]\rightarrow [n]$ defines a fixed enumeration of the elements in the class $[g]$.

Using the techniques of \cite{BRSW}, we prove that
$\|V\| \prec \sqrt{N}/\sqrt{n}$ and $\|UV\|_{cb} \succ \sqrt{N}/\log n$, where here we use symbols $ \prec$ and $\succ$ to denote inequality up to universal constants independent of the dimension. Indeed, consider the factorization $V = i_p C_\mu j_2$, where $j_2$ is the 
formal identity from $\ell_2^{Nn}$ to $L_2^{Nn} = L_2(G)$, where this last space is equipped with the
uniform probability measure on $G$, $C_\mu : L_2(G) \to L_p(G)$ is the convolution with $\mu$, $p = (1+\vr)/\vr$, and
$i_p : L_p(G) \to \ell_1^N(\ell_\infty^n)$ is the formal identity.
We clearly have $\|j_2\| = (Nn)^{-1/2}$. Furthermore, $\|i_p\| = N n^{1/p}$.
Indeed, this follows by noting that $$\|i_p\|\leq \|i_p: L_p(G)\rightarrow \ell_p^N(\ell_p^n)\|\|id:\ell_p^N(\ell_p^n)\rightarrow \ell_1^N(\ell_\infty^n)\|\leq (Nn)^{1/p}N^{1-1/p}=N n^{1/p}.$$Finally, $p$ is selected to make $C_\mu$ contractive, by Bonami-Beckner
Hypercontractivity Inequality (see e.g. \cite[Theorem 4.1]{W08}). This gives
$$
\|V\| \leq \|i_p\| \|C_\mu\| \|j_2\| = N n^{1/p} (Nn)^{-1/2} = N^{1/2} n^{1/p-1/2} =
 N^{1/2} n^{\vr/(1+\vr) - 1/2} .
$$

To define the operator $U$, consider, for each $x = (x_1, \ldots, x_n) \in G = \{0,1\}^n$,
the unit vector $$\displaystyle h_x =
\frac{1}{\sqrt{n}} \sum_{i=1}^n (-1)^{x_i} e_i \in \ell_2^n.$$
Here, $e_1, \ldots, e_n$ are the elements of the canonical basis in $\ell_2^n$.
Let $p_x = h_x \otimes h_x \in M_n$ be the orthogonal projection onto $\C h_x$, and
define $U : \ell_1^N(\ell_\infty^n)\to M_n : \delta_x \mapsto p_x$.
As noted in \cite{BRSW}, if $x, y \in G$ are distinct and belong to the same coset of
$\Omega = G/G_0$, then $|x \ominus y| = n/2$, hence $\langle h_x, h_y \rangle = 0$,
and consequently, $p_x p_y = 0$. Therefore, the restriction of $U$ to any copy of $\ell_\infty^n$
in its domain is a complete isometry. Thus, $U$ is a complete contraction.
Moreover, $U \delta_x$ has real entries for any $x$, hence $U$ is represented by a real matrix
(relative to canonical bases).

By construction, $U$ is real with respect to the standard bases.
Combining Proposition \ref{OH prop} with \eqref{eq:good norm}, we obtain $\|UV\|_{cb}^2 =
 \|\sum_{x \in G} UV \delta_x \otimes UV \delta_x\|_{M_n \otimes_{\min} M_n}$.
Identify $M_n \otimes_{\min} M_n$ with $M_{n^2}$, and consider the maximally entangled
state, defined as $\displaystyle f(A) = \langle A \xi, \xi \rangle$ for every $A\in M_{n^2}$, where
$$\displaystyle \xi = \frac{1}{\sqrt{n}} \sum_{i=1}^n e_i \otimes e_i.$$
For $a, b \in M_n$, $\displaystyle f(a \otimes b) =\frac{1}{n} \sum_{i,j=1}^n \langle e_j, a e_i \rangle \langle e_j, b e_i \rangle= \frac{1}{n} \trace(ab^{tr})$. Thus, $$\displaystyle f(p_x \otimes p_y) = \frac1n \langle h_x, h_y \rangle^2 =
 \frac{n - 2|x \ominus y|}{n^2}.$$

As $V = V^*$, $\displaystyle \sum_{x \in G} V \delta_x \otimes V \delta_x$ can be
identified on the vector space level with the operator $C_\mu^2 : \C^G \to \C^G$.
For any Walsh function $w_A$, we have $C_\mu^2 w_A = \vr^{|A|} w_A$, hence $C_\mu^2 = C_\nu$,
for the measure
$\displaystyle \nu = \left( \frac{1+\vr}{2} \delta_0 + \frac{1-\vr}{2} \delta_1 \right)^{\otimes n}$.
Therefore,
$$\sum_{x \in G} V \delta_x \otimes V \delta_x =
 \sum_{y,z \in G}  \left( \frac{1+\vr}{2} \right)^{n-|y \ominus z|}
 \left( \frac{1-\vr}{2} \right)^{|y \ominus z|} \delta_y \otimes \delta_z , $$
yielding
$$\sum_{x \in G} UV \delta_x \otimes UV \delta_x =
 \sum_{y,z \in G}  \left( \frac{1+\vr}{2} \right)^{n-|y \ominus z|}
 \left( \frac{1-\vr}{2} \right)^{|y \ominus z|} p_y \otimes p_z.$$
Consequently,
$$
\|UV\|_{cb}^2 = \left\| \sum_{x \in G} UV \delta_x \otimes UV \delta_x \right\| \geq
 \sum_{y,z \in G}  \left( \frac{1+\vr}{2} \right)^{n-|y \ominus z|}
 \left( \frac{1-\vr}{2} \right)^{|y \ominus z|} f(p_y \otimes p_z)
$$
$$
= \frac1n \sum_{y,z \in G}  \left( \frac{1+\vr}{2} \right)^{n-|y \ominus z|}
 \left( \frac{1-\vr}{2} \right)^{|y \ominus z|} \left( 1 - 2 \frac{|z \ominus y|}{n} \right)^2
$$
$$
= N \sum_{a \in G} \left( \frac{1+\vr}{2} \right)^{n-|a|}
 \left( \frac{1-\vr}{2} \right)^{|a|} \left( 1 - 2 \frac{|a|}{n} \right)^2 =
 N \expe \left( 1 - 2 \frac{X}{n} \right)^2 ,
$$
where $X$ is polynomially distributed with parameters $n$ and $\big(1+\vr\big)/2$; that is,
$$
\prob \Big( X = k \Big) = \left( \frac{1+\vr}{2} \right)^{n-k}
 \left( \frac{1-\vr}{2} \right)^{k} \binom{n}{k} .
$$
It is well known that the expected value and the variance of $X$ are given by
$\displaystyle n \frac{1-\vr}{2}$ and $\displaystyle n \frac{1-\vr}{2} \frac{1+\vr}{2}$
respectively. Hence, 
\begin{align*}
\expe \left( 1 - 2 \frac{X}{n} \right)^2 &= 1 - \frac2n \expe(X) + \frac{1}{n^2} \expe(X^2) \\&=
1 - 4 \frac{1-\vr}{2} + \frac{4}{n^2} \left[ \left( n \frac{1-\vr}{2} \right)^2 +
 n \frac{1-\vr}{2} \frac{1+\vr}{2} \right]\\& >
\left( 1 - 2 \frac{1-\vr}{2} \right)^2 = \vr^2 .
\end{align*}
Thus, $\|UV\|_{cb} \geq N^{1/2} \vr$.

Meanwhile, $\displaystyle \|V\| \leq  N^{1/2} n^{\vr - 1/2}$. Set $\vr \sim 1/\log n$, we obtain:
$\|UV\|_{cb} \succ N^{1/2}/\log n$, $\displaystyle \|V\| \prec  N^{1/2}/ n^{1/2}$.
\end{proof}
The main result of this work follows from the following result. 
\begin{prop}\label{p:from OH}
Suppose $n$ is a positive power of $2$. Then there exist $m \leq c n^8$, $s \leq n^2$,
($c$ is an absolute constant), an operator $T : OH_s \to \ell_1^m(\ell_\infty^n)$
and a complete contraction $S : \ell_1^m(\ell_\infty^n) \to M_n$, so that
$$\frac{\|ST\|_{cb}}{\|T\|}=\Omega\Big(\frac{\sqrt{n}}{\log n}\Big).$$

Moreover,  one can select an orthonormal basis $(\theta_i)_{i=1}^s$ in $OH_s$
so that, for $1 \leq i \leq s$, $T\theta_i$ and $ST\theta_i$ have real coefficients
relative to the canonical bases of the spaces $\ell_1^m(\ell_\infty^n)$ and $M_n$
respectively.
\end{prop}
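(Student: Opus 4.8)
The plan is to remove, one at a time, the two oversized quantities carried by Proposition~\ref{p:buhrman} -- the dimension $Nn=2^n$ of the domain of $V$, and the number $N$ of $\ell_\infty^n$-blocks (cosets $\omega\in\Omega=G/G_0$) -- keeping the ratio $\|UV\|_{cb}/\|V\|=\Omega(\sqrt n/\log n)$ fixed up to absolute constants. \emph{Step 1: cut $V$ to low Fourier degree.} In the proof of Proposition~\ref{p:buhrman} the lower bound for $\|UV\|_{cb}$ is produced by testing $\sum_{x\in G}UV\delta_x\otimes UV\delta_x$ against the maximally entangled state $f$, and since $f(p_y\otimes p_z)=\tfrac1n\langle h_y,h_z\rangle^2$ with $\langle h_y,h_z\rangle=\tfrac1n\sum_i w_{\{i\}}(y)w_{\{i\}}(z)$ a Walsh polynomial of degree $1$, that test only sees the Walsh components of $V$ of degree $\le 2$; as $Vw_A=\vr^{|A|/2}w_A$, restricting $V$ to $H:=\operatorname{span}\{w_A:|A|\le 2\}$ does not change it. Equip $H\subseteq\ell_2(G)$ with the operator space structure induced from $OH_{Nn}$; since subspaces of $OH$ are again $OH$, $H$ is completely isometric to $OH_s$ with $s=1+n+\binom n2\le n^2$. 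Let $(\theta_i)_{i=1}^s$ be the orthonormal basis of normalized Walsh functions of degree $\le 2$ and put $V_0:=V|_H$. Being a restriction of $V$, $\|V_0\|\le\|V\|=O(\sqrt{N/n})$, and by Proposition~\ref{OH prop}(2), Corollary~\ref{OH real} (the $UV_0\theta_i$ are real), and the computation in Proposition~\ref{p:buhrman},
\[
\|UV_0\|_{cb}^2=\Big\|\sum_{i=1}^s UV_0\theta_i\otimes UV_0\theta_i\Big\|_{M_{n^2}}\ \ge\ \Big\langle f,\ \sum_{i=1}^s UV_0\theta_i\otimes UV_0\theta_i\Big\rangle\ =\ \frac1n\sum_{i=1}^s\|UV_0\theta_i\|_{HS}^2\ =\ \Omega\!\Big(\frac{N}{\log^2 n}\Big).
\]
Thus $\|UV_0\|_{cb}/\|V_0\|=\Omega(\sqrt n/\log n)$, and the domain is now only $s\le n^2$-dimensional.

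\emph{Step 2: empirically sample the $N$ cosets, preserving two things at once.} Apply the vector-valued empirical method (Proposition~\ref{p:reduction}, Remark~\ref{remark: map J}) to $E:=V_0(OH_s)\subseteq\ell_1^N(\ell_\infty^n)$: it draws $i_1,\dots,i_m\in[N]$ from a probability density $\phi$ and sets $\alpha_k=1/(m\phi(i_k))$, and for $m\gtrsim s^2$ the map $J=J_{(i_k),(\alpha_k)}:\ell_1^N\to\ell_1^m$ makes $(J\otimes\id_{\ell_\infty^n})|_E$ a $(1+\vr_0)$-isomorphism onto $\tilde E\subseteq\ell_1^m(\ell_\infty^n)$. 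I further require of the sample that, writing $u_\omega:\ell_\infty^n\to M_n$, $f_\ell\mapsto p_{x_{\omega,\ell}}$, for the $\omega$-block of $U$, $\mathcal{H}:=\bigoplus_{i=1}^s(M_n,\|\cdot\|_{HS})$, $\Psi_j:=\big(u_{\omega_j}((V_0\theta_i)_{\omega_j})\big)_{i=1}^s\in\mathcal{H}$ and $\beta_j:=\sum_{k:i_k=j}\alpha_k$, the reweighted coset sum reproduces the full one: $\sum_j\beta_j\Psi_j\approx\sum_j\Psi_j=(UV_0\theta_i)_{i=1}^s$. Since $\expe\big[\sum_j\beta_j\Psi_j\big]=\sum_j\Psi_j$, independence of the $i_k$ gives $\expe\big\|\sum_j(\beta_j-1)\Psi_j\big\|_{\mathcal{H}}^2\le\tfrac1m\sum_j\|\Psi_j\|_{\mathcal{H}}^2/\phi(j)$, and the $L_\infty$-normalization of Lemma~\ref{l:inf norm}, namely $\|(V_0\theta_i)_\omega\|_{\ell_\infty^n}\le s\,\phi(\omega)\,\|V_0\theta_i\|_{\ell_1^N(\ell_\infty^n)}$, together with $\|\theta_i\|_{\ell_1^N(\ell_\infty^n)}=2^{-n/2}N$, bounds the right-hand side by $O\big(s^2 N\big(\sum_{|A|\le2}\vr^{|A|}\big)/m\big)=O(s^2 n^2\vr^2 N/m)$; as $\|\sum_j\Psi_j\|_{\mathcal{H}}^2=\sum_i\|UV_0\theta_i\|_{HS}^2=\Omega(n\vr^2 N)$, once $m\gtrsim s^2 n\asymp n^5$ (hence a fortiori for $m\le cn^8$ as in the statement) a single sample meets both requirements, with $\|\sum_j(\beta_j-1)\Psi_j\|_{\mathcal{H}}\le\tfrac12\|\sum_j\Psi_j\|_{\mathcal{H}}$. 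Put $T:=(J\otimes\id)V_0$; then $\|T\|\le(1+\vr_0)\|V_0\|=O(\sqrt{N/n})$ and $T\theta_i$ is real.

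\emph{Step 3: assemble $S$ and conclude.} Let $T_\sigma:\ell_1^m\to\ell_1^N$, $e_k\mapsto e_{i_k}$, be the coset-relabeling; it is a complete contraction, being the adjoint of the coordinate map $\ell_\infty^N\to\ell_\infty^m$, $a\mapsto(a_{i_k})_k$. Set $S:=U\circ(T_\sigma\otimes\id_{\ell_\infty^n}):\ell_1^m(\ell_\infty^n)\to M_n$, a complete contraction. Since $(T_\sigma\otimes\id)\circ(J\otimes\id)$ is the diagonal map $(x_j)_j\mapsto(\beta_j x_j)_j$ on $\ell_1^N(\ell_\infty^n)$, we get $ST\theta_i=\sum_j\beta_j u_{\omega_j}((V_0\theta_i)_{\omega_j})$, i.e.\ $(ST\theta_i)_{i=1}^s=\sum_j\beta_j\Psi_j$ in $\mathcal{H}$; by Step~2, $\sum_i\|ST\theta_i\|_{HS}^2\ge\tfrac14\sum_i\|UV_0\theta_i\|_{HS}^2$. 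As $ST\theta_i$ is real, Proposition~\ref{OH prop}(2), Corollary~\ref{OH real} and testing against $f$ give $\|ST\|_{cb}^2\ge\langle f,\sum_i ST\theta_i\otimes ST\theta_i\rangle=\tfrac1n\sum_i\|ST\theta_i\|_{HS}^2\ge\tfrac14\cdot\tfrac1n\sum_i\|UV_0\theta_i\|_{HS}^2=\Omega(N/\log^2 n)$. Hence $\|ST\|_{cb}/\|T\|=\Omega(\sqrt n/\log n)$, which is the Proposition, with $(\theta_i)_{i=1}^s$ the announced orthonormal basis and $T\theta_i$, $ST\theta_i$ real.

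\emph{The main obstacle.} The cut of Step~1 and the two uses of the maximally entangled state (Steps~1 and~3) are soft applications of Proposition~\ref{OH prop}. The real content is Step~2: running the empirical method so that one and the same sample $(i_k)$ both $(1+\vr_0)$-preserves the norm of $E$ (the standard output of Proposition~\ref{p:reduction}) and keeps the second-moment error $\expe\|\sum_j(\beta_j-1)\Psi_j\|_{\mathcal{H}}^2$ well below $\|\sum_j\Psi_j\|_{\mathcal{H}}^2$. The latter amounts to the estimate for $\sum_j\|\Psi_j\|_{\mathcal{H}}^2/\phi(j)$, where the $L_\infty$-normalization of Lemma~\ref{l:inf norm} is indispensable and where the polynomial blow-up $m\asymp n^5$ (comfortably within $cn^8$) is forced because $s\asymp n^2$; one must also check that the two favourable events have total probability $<1$, so that a common good sample exists, which is routine for $m$ in this range.
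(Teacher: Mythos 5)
Your argument is correct, but it departs from the paper's proof in the two places that matter, so a comparison is in order. For the ``cut'', the paper restricts $V$ to the orthocomplement of $\ker UV$ and uses the homogeneity of $OH$ to keep $\|U\tilde V\|_{cb}=\|UV\|_{cb}$ exactly; you restrict to the span of Walsh functions of degree $\le 2$ and keep only the lower bound coming from the maximally entangled state (which indeed annihilates all other Walsh components, since $f(p_y\otimes p_z)=\tfrac1n\langle h_y,h_z\rangle^2$ expands into Walsh characters of degree $0$ and $2$). Both give $s\le n^2$ and both suffice. The substantive divergence is in the dimension reduction. The paper applies Proposition \ref{p:reduction} to the $n^4$-dimensional space $S_1^n[E]\subset\ell_1^N(S_1^n[\ell_\infty^n])$ (whence $m\approx n^8$), precisely so that $J\otimes\id$ is an isomorphism at the $S_1^n$-level; this, via $\|S\|_{cb}=\|\id_{S_1^n}\otimes S\|$ for maps into $M_n$, makes $S=U\circ(J^{-1}\otimes\id)$ completely bounded after a cb-extension, and then $ST=U\tilde V$ \emph{exactly}, so $\|ST\|_{cb}=\|UV\|_{cb}$. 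You instead apply the empirical method only to the $n^2$-dimensional $E$ at the Banach level (to control $\|T\|$), take for $S$ the tautological composition $U\circ(T_\sigma\otimes\id)$ with the ``un-sampling'' map $T_\sigma:e_k\mapsto e_{i_k}$ --- which is automatically a complete contraction, being the preadjoint of a coordinate restriction on $\ell_\infty^N(\ell_1^n)$ --- and then pay for the fact that $ST=UD_\beta V_0\neq UV_0$ by a separate second-moment argument showing that, with positive probability over the same sample, the reweighted operator still passes the entangled-state test. Your variance bound is right: with $\expe\beta_j=1$ and independence, $\expe\|\sum_j(\beta_j-1)\Psi_j\|^2\le m^{-1}\sum_j\|\Psi_j\|^2/\phi(j)$, and the Auerbach-density bound $\|(V_0\theta_i)_\omega\|_\infty\le s\,\phi(\omega)\|V_0\theta_i\|_1$ together with $\|u_\omega(a)\|_{HS}^2\le n\|a\|_\infty^2$ and $\sum_i\|V_0\theta_i\|_1^2\asymp \tfrac{N}{n}\,n^2\vr^2$ gives exactly the stated $O(s^2n^2\vr^2N/m)$ against $\sum_i\|UV_0\theta_i\|_{HS}^2=\Omega(n\vr^2N)$, so $m\gtrsim n^5$ suffices and a union bound over the two good events is routine. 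Your route avoids the vector-valued ($S_1^n$-level) use of the empirical method and the cb-extension theorem, and in fact yields a better exponent ($m\approx n^5$ instead of $n^8$); the price is that the lower bound on $\|ST\|_{cb}$ is tied to the specific state $f$ and to the orthogonality structure of $U$ within each coset, rather than being an exact equality of cb-norms.
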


Now set $$M:=\frac{1}{\|T\|^2}\sum_{i=1}^{s} T\theta_i \otimes T\theta_i
 \in \ell_1^m(\ell_\infty^n) \otimes \ell_1^m(\ell_\infty^n).$$
According to Proposition \ref{OH prop}, Corollary \ref{OH real},
\eqref{eq:good norm}, and \eqref{eq:good norm eps},
we have
$$\|M\|_{\epsilon}=1\text{     }\text{  and   }\text{     }  \|M\|_{\min}\geq
\|(S \otimes S)M\|_{M_n \otimes_{\min} M_n} =\Omega\Big(\frac{n}{\log^2 n}\Big).$$
Our main Theorem \ref{main theorem} follows now from Propositions
\ref{connection min-viol} and \ref{p:Regev}.

\begin{proof}[Proof of Proposition \ref{p:from OH}]
We need to ``reduce the dimension'' from $N$ to $m$.
Let $H$ be the orthogonal complement of $\ker UV$ in $OH_{Nn}$.
Note that $s:= \dim H \leq \rank V \leq n^2$. Moreover, as $UV$ has real
coefficients (with respect to the canonical bases),
$H$ is the range of $(UV)^t$ (the transpose of $UV$).
Consequently, $H$ is spanned by real linear combinations of
the canonical basis of $OH_{Nn}$. A Gram-Schmidt procedure yeilds
an orthonormal basis in $H$ whose elements have real coefficients
relative to the canonical basis of $OH_{Nn}$. Therefore, there
exists an isometry $R : OH_s \to H$ with real coefficients (relative
to the canonical basis $(\theta_i)_{i=1}^s$ of $OH_s$, and the canonical
basis $(\delta_x)$ of $OH_{Nn}$). Let $\tilde{V} = VR$. Then
$\|\tilde{V}\| \leq \|V\| =O\big( \sqrt{N}/\sqrt{n}\big)$, and
(due to the homogeneity of $OH_{Nn}$)
$\|U \tilde{V}\|_{cb} = \|UV\|_{cb} =\Omega\big(\sqrt{N}/\log n\big)$.
Let $E = \rank \tilde{V}$, which is an $n^2$-dimensional subspace of $\ell_1^N(\ell_\infty^n)$. 

We will consider the space $S_1^n[E]$, which is known to be a subspace of
$S_1^n[\ell_1^N(\ell_\infty^n)]$. Moreover, this last space is completely isometric to
$\ell_1^N(S_1^n[\ell_\infty^n])$ via the natural identification. Hence, if we denote
$\tilde{E}=S_1^n[E]$ and $X=S_1^n[\ell_\infty^n]$, we have that the $n^4$ dimensional space
$\tilde{E}$ is a subspace of $\ell_1^N(X)$. We can then apply Proposition \ref{p:reduction}
to deduce that the map $J$ introduced in Remark \ref{remark: map J} verifies that
$J\otimes id_X:\ell_1^N(X)\rightarrow \ell_1^m(X)$ is a $\frac{1}{2}$-embedding when
it is restricted to the subspace $\tilde{E}$ and $m\simeq n^8$. That is, the map
$(J\otimes \ell_\infty^n)|_E\otimes id_{S_1^n}:S_1^n[E]\rightarrow \ell_1^N(S_1^n[\ell_\infty^n])$
defines a $\frac{1}{2}$-isomorphism. The following diagram gives us the picture:
$$
\xymatrix@R=1.25cm@C=2.25cm {{\ell_1^N(S_1^n[\ell_\infty^n])}\ar[r]^{J\otimes id_{S_1^n}\otimes id_{\ell_\infty^n}} & {\ell_1^m(S_1^n[\ell_\infty^n])} \\
{S_1^n[E]}\ar@{^{(}->}[u]\ar[r]^{(J\otimes \ell_\infty^n)|_E\otimes id_{S_1^n}} & {S_1^n[(J\otimes \ell_\infty^n)(E)].}\ar@{^{(}->}[u]}
$$
Let us define the linear map $T : OH^{n^2} \to \ell_1^m(\ell_\infty^n)$ given by
$T=(J\otimes id_{\ell_\infty^n}) \circ \tilde{V}$. Since $E=\rank \tilde{V}$ and
$(J\otimes \ell_\infty^n)|_E$ must be an $\frac{1}{2}$-embedding, we immediately obtain
that $\|T\|\lesssim \|\tilde{V}\|=O\big(\sqrt{N}/\sqrt{n}\big)$. On the other hand,
if we denote $F=(J\otimes id_{\ell_\infty^n})(E)$ we can define the map
$S=U\circ (J^{-1}\otimes id_{\ell_\infty^n}): F\rightarrow M_n$. Now, since $U$
is completely contractive and the map
$id_{S_1^n}\otimes (J^{-1}\otimes id_{\ell_\infty^n})):
 S_1^n[F]\rightarrow S_1^n[\ell_1^N(\ell_\infty^n)]$
is a $\frac{1}{2}$-embedding, we deduce that $\|S\|_{cb}=O(1)$. Indeed, to conclude
this we have used that for every linear map $S$ into $M_n$ we have
$\|S\|_{cb}=\|id_{S_1^n}\otimes S\|$ (see \cite[Theorem 1.5+Lemma 1.7]{PiLp}).
Moreover, we can extend $S$ to an operator from $\ell_1^m(\ell_\infty^n)$ to $M_n$
(also denoted by $S$) without increasing its cb norm (see \cite[Cor. 1.7]{PiOS}).

Finally, note that
\begin{align*}
\|ST\|_{cb}=
\|U\circ (J^{-1}\otimes id_{\ell_\infty^n)}\circ (J\otimes id_{\ell_\infty^n}) \circ \tilde{V}\|_{cb}=
\|U \tilde{V}\|_{cb}=\|U V\|_{cb}=\Omega\Big(\frac{\sqrt{N}}{\log n}\Big).
\end{align*}
As $J$ has real (in fact, positive) coefficients, we are done.
\end{proof}
\subsection{Some comments on our construction}\label{sec comments}
Let us finish this section with some final comments about our results. First, we notice that our procedure starts with the Khot-Vishnoi game $G_{KV}$, which has a particularly nice structure in $\ell_1^{N}(\ell_\infty^{n})\otimes \ell_1^{N}(\ell_\infty^{n})$ (it has for instance positive coefficients and it is completely explicit), and outputs another element $M	\in \ell_1^{m}(\ell_\infty^{n})\otimes \ell_1^{m}(\ell_\infty^{n})$ with a more obscure description. There are several reasons for this. First of all, in order to apply the empirical method in the form of Proposition \ref{p:reduction}, we need to start with a low rank element, and the Khot-Vishnoi game has a very large rank. In order to save this obstacle, we must cut the Khot-Vishnoi game, which is done by considering the element $\tilde{V} = VP$ in the proof of Proposition \ref{p:from OH}. Unfortunately, composing with the general projection $P$ involves a lack of control on the structure of the new object. An important point is that we cannot 
assure that the new element $M$ (defined via $\tilde{V}$) has positive coefficients  when seen as an element in $\ell_1^{m}(\ell_\infty^{n})\otimes \ell_1^{m}(\ell_\infty^{n})$ (even though our starting point $G_{KV}$ did). This limitation is the main reason to consider the bias of the game (which allows us to work with nonpositive elements via the correspondence with Bell inequalities explained in the introduction). We think that proving a reduction atoms method which preserves positivity is a very interesting problem.

Another disadvantage of our result is that it introduces some randomness. Indeed, while the Khot-Vishnoi game is a completely explicit element in $\ell_1^{N}(\ell_\infty^{n})\otimes \ell_1^{N}(\ell_\infty^{n})$, our final element in $\ell_1^{m}(\ell_\infty^{n})\otimes \ell_1^{m}(\ell_\infty^{n})$ is not explicit since the map $J$ explained in Remark \ref{remark: map J} has a probabilistic nature. 

Finally, in order to obtain Theorem \ref{main theorem} we must use Proposition \ref{connection min-viol}, which implies a lack of knowledge about the quantum probability distribution to be used in order to lower bound the vale $\omega^*(M)$. However, a careful study of the proof of Proposition \ref{p:from OH} allows to see that one can use the $n$-dimensional maximally entangled in Theorem \ref{main theorem}.

On the other hand, the procedure used in this paper is very general and it can be applied in many different contexts. There are two key points in our proof. First of all, the fact that the Khot-Vishnoi game is an \emph{OH-game}, in the sense that it can be seen as an element of the form $VV^*$, where $V : OH_{Nn} \to \ell_1^N(\ell_\infty^n)$. The second crucial element in our proof is that, although the game $G_{KV}$ has a very large rank, the dimension used in the corresponding quantum strategy to lower bound the value $\omega^*(G_{KV})$ is of order $n$. The previous two ingredients allow to ``cut the Khot-Vishnoi game'' so that we obtain a new element with lower rank and essentially the same classical and quantum values. Although factorizing through a Hilbert space $OH_N$ can be understood as a very restrictive property, most of the games used to obtain large Bell violations have this characteristic (as those in \cite{JPI, JPPVW}). Hence, this property seems to be very natural when studying extreme objects 
and it is very plausible that our method is of independent interest in some other contexts such that the study of integrality gaps between the classical and quantum value of a game and certain SDP-relaxations.

Finally, we note that exactly the same approach followed in this work can be applied to the recent paper \cite{PaYi} to prove that Bell violations of order $\sqrt{n}/\log^2n$ can be obtained by only using binary questions in one party and with the additional property that only a polynomial number of questions are needed. 
\section{A potential improvement of Proposition \ref{p:reduction}}\label{sec: improving empirical method}

Can we do better than in Section \ref{s:empirical}? Yes, if we follow \cite[Section 3]{BLM},
and control the type of our subspace of $L_1(X)$.

First re-state \cite[Lemma 3.3]{BLM}.
\begin{prop}\label{p:type L1}
Suppose $X$ is a Banach space, $\vr \in (0,1/2)$, and $E$ is an $m$-dimensional subspace of
$\ell_1^N(X)$, with $N \geq m/\vr^2$. If
$$
K \geq c T_p(X) \vr^{-2} \frac{\big( \log \vr^{-1} \big)^{1-1/p}}{p-1}
 m \left( \frac{N}{m} \right)^{1/p} \left( \log \frac{N}{m} \right)^{1/p} ,
$$
then $\ell_1^K(X)$ contains a space $E^\prime$ which is $(1+\vr)$-isomorphic to $E$.
\end{prop}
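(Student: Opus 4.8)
The plan is to follow the scheme of \cite[Section 3]{BLM} and adapt the argument of Proposition \ref{p:reduction} so that the large-deviation estimate is replaced by a type-$p$ estimate, which gives a much better dependence of $K$ on $N/m$. As in the proof of Proposition \ref{p:reduction}, I would first normalize: choosing a normalized Auerbach basis $(e_i)_{i=1}^m$ of $E$, form $\phi(t) = m^{-1}\sum_i \|e_i(t)\|_X^p$ — here written with the coordinate index of $\ell_1^N$ playing the role of $t$ and counting measure replaced by the uniform probability on $[N]$ — and pass via Lemma \ref{l:inf norm} (with $r=1$, which is the relevant case since we work in $\ell_1^N(X)$) to a density-changed copy of $E$ in $L_1(\nu,X)$ on which every unit vector $e$ satisfies $\|e\|_{L_\infty(\nu,X)} \le m\|e\|_{L_1(\nu,X)}$. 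One subtlety here is that the reweighting $\phi$ should be built with the $p$-th powers rather than first powers so that the type-$p$ inequality can be brought to bear; this is exactly the point where the structure of the $BLM$ argument diverges from the plain empirical method.

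Next, for a fixed unit vector $e \in E$ and an $n$-fold product sample $t = (t_1,\dots,t_K)$ drawn from $\nu$, I would estimate the deviation of $\|T_t(e)\|_{L_1^K(X)} = K^{-1}\sum_{i=1}^K \|e(t_i)\|_X$ from its mean $\|e\|_{L_1(\nu,X)}$. Instead of a scalar Bernstein-type bound, the key step is a symmetrization followed by the type-$p$ inequality for $X$: bounding $\expe \sup_{e \in \mathcal N} \big| \|T_t(e)\|_1 - 1 \big|$ by a Rademacher average $\expe_\eps \expe_t \| \sum_i \eps_i \delta_{t_i} \otimes e(t_i) \|$ and then using $T_p(X)$ to convert this into a bound involving $\big(\sum_i \|e(t_i)\|_X^p\big)^{1/p}$, which is controlled by the $L_\infty$-bound $m$ coming from the first step together with the $L_1$-normalization. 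Combining this with a standard $\eta$-net argument on the unit sphere of $E$ (net cardinality $\le (3/\eta)^m$, $\eta \sim \vr$) and a chaining/union-bound over the net gives the stated form of $K$, with the factors $T_p(X)$, $\vr^{-2}$, $(\log\vr^{-1})^{1-1/p}/(p-1)$, $m(N/m)^{1/p}$ and $(\log(N/m))^{1/p}$ appearing respectively from the type constant, the net resolution and Markov/Chernoff step, the moment interpolation in the type-$p$ inequality, the normalization bound $\|e\|_\infty \le m$ combined with the measure change, and the entropy of the net interacting with the tail.

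Once the deviation estimate holds uniformly over an $\eta$-net with positive probability, I would finish exactly as in Proposition \ref{p:reduction}: the approximation-by-net trick (writing an arbitrary unit vector as $e_0 + \sum_{k\ge 1} a_k e_k$ with $e_k$ in the net and $|a_k| \le \eta^k$) upgrades control on the net to control on the whole sphere, giving a $(1+\vr)$-isomorphism onto a subspace $E'$ of $\ell_1^K(X)$; and since the sampling map $J$ only selects coordinates and rescales, the resulting $E'$ indeed sits inside $\ell_1^K(X)$ as required. I expect the main obstacle to be the second step — correctly running the symmetrization plus type-$p$ argument so that all four parametric factors come out with the claimed exponents, in particular the moment-interpolation that produces the $(\log\vr^{-1})^{1-1/p}/(p-1)$ term (this is the place where one optimizes over the moment order, as in Lemma 3.2 and Lemma 3.3 of \cite{BLM}) and making sure the $L_\infty$-normalization $m$ interacts correctly with the change of density to yield the $m(N/m)^{1/p}$ factor rather than a worse power of $N$. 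Everything else (the Auerbach normalization, the net construction, the successive-approximation argument) is routine and parallels Section \ref{s:empirical} verbatim.
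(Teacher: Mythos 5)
You have correctly located the source (\cite[Section 3]{BLM}) and correctly sensed that all of the difficulty lies in your ``second step,'' but the mechanism you propose there is not the one that makes \cite{BLM} work, and as described it will not produce the claimed bound. The factor $m(N/m)^{1/p}$ does \emph{not} come from the Auerbach-basis bound $\|e\|_{L_\infty}\le m\|e\|_{L_1}$ combined with a change of density: that bound is exactly what drives the plain empirical method of Proposition \ref{p:reduction} and yields $K\sim m^{2}$, not $m(N/m)^{1/p}$. The improvement in \cite[Lemma 3.3]{BLM} comes from Pisier's factorization theorem through $L_{p\infty}$ \cite{Pi86}: the type-$p$ hypothesis produces a \emph{different} change of density, after which every unit vector of $E$ satisfies a weak-$L_p$ tail estimate with constant of order the type constant, and it is this tail bound, fed into a truncated deviation inequality (\cite[Lemma 3.2]{BLM}), that generates the factors $(N/m)^{1/p}(\log (N/m))^{1/p}$ and $(\log\vr^{-1})^{1-1/p}/(p-1)$. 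Your proposed substitute --- symmetrizing $K^{-1}\sum_i(\|e(t_i)\|_X-1)$ and then ``using the type-$p$ inequality for $X$'' --- does not obviously invoke the type of $X$ at all: for fixed $e$ the summands $\eps_i\|e(t_i)\|_X$ are scalars, so Khintchine already controls them, and the map $e\mapsto\|e(t_i)\|_X$ is not linear, so there is no $X$-valued Rademacher sum to which $T_p(X)$ could be applied. (Also, forming $\phi$ from $p$-th powers of an $L_1$-normalized Auerbach basis does not yield a probability density without further normalization, and in any case the density one needs is the one supplied by the factorization theorem, not by the Auerbach basis.)

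There is a further difficulty which the paper itself flags and which your proposal inherits. The relevant vector-valued factorization statement (Corollary \ref{c:type}) requires the \emph{domain} of the operator to have type $p$; to run the argument one must take the domain to be $E$ itself, with $T$ the inclusion into $L_1(\mu,X)$, and type $p$ of a subspace $E\subset\ell_1^N(X)$ is not a consequence of $T_p(X)<\infty$ (recall that $L_1$ has only trivial type). The authors say explicitly after Corollary \ref{c:type} that they do not know how to apply it in this situation, and the entire Section \ref{sec: improving empirical method} is presented as a ``potential improvement'' that is not used in the main results. So the paper's own argument is only a sketch --- a reduction to \cite[Section 3]{BLM} plus a vector-valued Pisier factorization, with one acknowledged unresolved point --- while your proposal replaces the single genuinely new ingredient (the factorization through $L_{p\infty}$) with a gap at precisely that spot and misattributes the origin of the parametric factors. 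To repair it, you would need to state and use the weak-$L_p$ factorization with change of density, and then adapt the truncation argument of \cite[Lemma 3.2]{BLM} to the resulting scalar variables $\|e(t_i)\|_X$.
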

\begin{remark}\label{r:map2}
As in Section \ref{s:empirical}, the $(1+\vr)$-isomorphism
from $E$ to $E^\prime$ we are constructing is actually the restriction to $E$ of the
truncation/change of density map $\ell_1^N(X) \to \ell_1^K(X) :
 f \mapsto \big(\alpha_i f_{s_i}\big)_{i=1}^K$.
\end{remark}

Now suppose $E$ is an $m$-dimensional subspace of $L_1(X)$. For $\vr > 0$, define
$N(E,\vr)$ to be the smallest $N$ so that $\ell_1^N(X)$ contains a $(1+\vr)$-isomorphic
copy of $E$. The following is a vector-valued version of \cite[Lemma 3.3']{BLM}
(and immediately follows from Proposition \ref{p:type L1}):

\begin{prop}\label{p:type N}
If $N(E,\delta) \geq 4 m \vr^{-2}$, then
$$
N(E,\vr+\delta) \geq c T_p(X) \vr^{-2} \frac{\big( \log \vr^{-1} \big)^{1-1/p}}{p-1}
 m \left( \frac{N(E,\delta)}{m} \right)^{1/p} \left( \log \frac{N(E,\delta)}{m} \right)^{1/p} .
$$
\end{prop}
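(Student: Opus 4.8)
The plan is to obtain Proposition~\ref{p:type N} as the ``functional'' reformulation of Proposition~\ref{p:type L1}, exactly as \cite[Lemma 3.3']{BLM} is deduced from \cite[Lemma 3.3]{BLM} — indeed, the paper already remarks that the statement ``immediately follows'' from Proposition~\ref{p:type L1}. Write $N := N(E,\delta)$. By the very definition of $N(E,\delta)$ there is an $m$-dimensional subspace $E_0 \subseteq \ell_1^{N}(X)$ together with a $(1+\delta)$-isomorphism $\Psi : E \to E_0$. The hypothesis $N \geq 4m\vr^{-2}$ in particular gives $N \geq m\vr^{-2}$, so $E_0$ satisfies the dimension requirement of Proposition~\ref{p:type L1}.

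First I would apply Proposition~\ref{p:type L1} to the pair $(E_0, \ell_1^{N}(X))$ with distortion parameter $\vr' := \vr/3$ (any fixed fraction of $\vr$ in $(0,1/2)$ works): for every integer $K$ exceeding the threshold $c\, T_p(X) (\vr')^{-2} \frac{(\log (\vr')^{-1})^{1-1/p}}{p-1}\, m\, (N/m)^{1/p} (\log (N/m))^{1/p}$, there is a subspace $E_1 \subseteq \ell_1^{K}(X)$ with $d(E_1, E_0) \leq 1+\vr'$, and by Remark~\ref{r:map2} the witnessing isomorphism is a truncation/change-of-density map $\ell_1^{N}(X) \to \ell_1^{K}(X)$. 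Composing with $\Psi$, $E$ is $(1+\vr')(1+\delta)$-isomorphic to $E_1 \subseteq \ell_1^{K}(X)$, and $(1+\vr')(1+\delta) \leq 1+\vr+\delta$ since $\vr'(1+\delta) \leq \vr$ in the relevant range $\delta < 1/2$. Because replacing $\vr$ by $\vr/3$ alters the threshold only by an absolute multiplicative constant (the $(\vr')^{-2}$ gives a factor $9$, and $(\log 3\vr^{-1})^{1-1/p}/(\log \vr^{-1})^{1-1/p}$ is bounded), which is absorbed into $c$, this shows $\ell_1^{K}(X)$ contains a $(1+\vr+\delta)$-isomorphic copy of $E$ for every $K$ above the displayed quantity.

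Consequently $N(E,\vr+\delta)$ is \emph{at most} the displayed quantity; since allowing more distortion can only decrease the number of $\ell_1$-summands one needs, $N(E,\cdot)$ is non-increasing, so the inequality in the proposition is naturally read in this (upper-bound) direction, in agreement with \cite[Lemma 3.3']{BLM}. There is no real obstacle: the entire substance is Proposition~\ref{p:type L1}, and the only things to verify are the bookkeeping of the two distortions $\vr'$ and $\delta$ and the (routine) observation that re-tuning $\vr$ by a constant factor does not affect the form of the bound. If one prefers to avoid the auxiliary space $E_0$, one can run the argument directly on a $(1+\delta)$-embedding of $E$ into $\ell_1^N(X)$ that realizes $N(E,\delta)$ up to an integer rounding, and then invoke Proposition~\ref{p:type L1} verbatim.
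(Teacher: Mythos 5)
Your argument is exactly the one the paper intends: no proof is given beyond the parenthetical remark that the proposition ``immediately follows from Proposition \ref{p:type L1},'' and your deduction (realize $N(E,\delta)$ by a $(1+\delta)$-embedding of $E$ into $\ell_1^{N(E,\delta)}(X)$, apply Proposition \ref{p:type L1} to the image with a slightly reduced parameter, compose the isomorphisms, and absorb the constant re-tuning into $c$) is precisely that deduction; you are also right to read the conclusion as an upper bound on $N(E,\vr+\delta)$ --- the printed $\geq$ is evidently a misprint, since only the $\leq$ direction can follow from the embedding statement of Proposition \ref{p:type L1}, it is what \cite[Lemma 3.3']{BLM} asserts, and it is what the iteration leading to Proposition \ref{p:type end} requires. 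One small calibration slip: with your choice $\vr'=\vr/3$, Proposition \ref{p:type L1} demands $N(E,\delta)\geq m(\vr')^{-2}=9m\vr^{-2}$, which the hypothesis $N(E,\delta)\geq 4m\vr^{-2}$ does not provide (so ``any fixed fraction of $\vr$'' is not accurate); taking $\vr'=\vr/2$ --- the value for which the constant $4$ in the hypothesis is calibrated --- fixes this, and the distortion bookkeeping $(1+\vr/2)(1+\delta)\leq 1+\vr+\delta$ and the absorption of the factor coming from $(\vr')^{-2}\bigl(\log (\vr')^{-1}\bigr)^{1-1/p}$ go through unchanged.
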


Applying the empirical method of Section \ref{s:empirical} and iterating, we obtain:

\begin{prop}\label{p:type end}
For any $\rho > 2$, any $m$-dimensional $E \subset L_1(X)$, and any $\vr > 0$, we have
$\displaystyle N(E,\vr) \leq C(p, T_p(E), \rho) \vr^{\rho p/(p-1)} m$.
\end{prop}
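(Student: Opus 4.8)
The plan is to derive Proposition~\ref{p:type end} from Proposition~\ref{p:type N} by an iteration argument, exactly parallel to the classical scalar computation in \cite[Section~3]{BLM}. The strategy is to start with a crude bound for $N(E,\vr)$ coming from the empirical method of Section~\ref{s:empirical} (Proposition~\ref{p:reduction} gives $N(E,\vr)\leq C(\vr)m^2$ with $C(\vr)=C_0\vr^{-2}\log(\vr^{-1})$), and then feed this into Proposition~\ref{p:type N} repeatedly, each application lowering the exponent of $m$ towards $1$. Concretely, write $\vr$ as a finite sum $\vr=\delta_0+\delta_1+\dots+\delta_k$ (for instance a geometric series $\delta_j\approx \vr/2^{j+1}$), so that after $k$ steps the accumulated error is still at most $\vr$. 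Setting $N_j=N(E,\sum_{i\leq j}\delta_i)$, Proposition~\ref{p:type N} gives a recursion of the shape $N_{j+1}\leq A_j\, m\,(N_j/m)^{1/p}(\log(N_j/m))^{1/p}$, where $A_j$ absorbs the $T_p(X)$ factor and the $\delta_j$-dependent constants.

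The key step is to unwind this recursion. Writing $u_j=\log(N_j/m)$, the inequality becomes roughly $u_{j+1}\leq \tfrac1p u_j + \log\log N_j + \log A_j + O(1)$, so the leading term $u_j$ contracts by a factor $1/p<1$ at each stage while the additive terms stay controlled (the $\log\log$ term is lower order and the $\log A_j$ term grows only polynomially in $j$ and logarithmically in $\delta_j^{-1}$). After $k=O(\log\log(N_0/m))$ iterations the exponent of $m$ in $N_k$ is brought down to $1+o(1)$, and in fact to the target value governed by the exponent $\rho p/(p-1)$: choosing the number of steps and the split $\delta_j$ appropriately, one checks that the surviving dependence on $\vr$ is precisely $\vr^{-\rho p/(p-1)}$ up to a constant depending only on $p$, $T_p(X)$ and $\rho$. (The role of the hypothesis $\rho>2$ is to guarantee that this exponent exceeds the exponent $2$ one would get from a single application of the empirical method, so that the geometric loss in the $\delta_j$'s can be accommodated.) One must also carry along the side condition $N(E,\delta)\geq 4m\vr^{-2}$ required by Proposition~\ref{p:type N}: if at some stage $N_j$ drops below this threshold, the process simply terminates early with an even better bound, so this is not an obstruction.

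I expect the main technical obstacle to be the bookkeeping in the iteration: one must choose the sequence $\delta_j$ and the stopping index $k$ so that (a) $\sum_j\delta_j\leq\vr$, (b) the product of the constants $A_j$ stays bounded by a constant depending only on $p,T_p(X),\rho$, and (c) the residual exponent on $m$ is exactly $1$ while the residual power of $\vr^{-1}$ is exactly $\rho p/(p-1)$. Balancing these three requirements is the crux; it is the place where the precise form of the exponent $\rho p/(p-1)$ emerges, and where one uses that $\tfrac1p+\tfrac1{p^2}+\dots=\tfrac1{p-1}$. The remainder of the argument — verifying the recursion from Proposition~\ref{p:type N}, handling the $\log\log$ terms, and invoking Remark~\ref{r:map2} to see that the resulting isomorphism is still of the desired truncation/change-of-density form — is routine. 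Since, as the authors note, these refinements are not needed elsewhere in the paper, it suffices to indicate the iteration and the way the exponent arises, following \cite{BLM} mutatis mutandis, rather than to optimize constants.
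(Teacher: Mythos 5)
Your proposal follows essentially the same route as the paper, whose entire proof is the single remark that one applies the empirical method of Section~\ref{s:empirical} and iterates Proposition~\ref{p:type N} as in \cite[Section~3]{BLM}; your sketch supplies exactly that iteration (crude bound from Proposition~\ref{p:reduction}, geometric splitting of $\vr$, contraction of the exponent of $m$ via $\frac1p+\frac1{p^2}+\cdots=\frac1{p-1}$), in rather more detail than the authors give. The only point worth flagging is that the exponent in the statement should presumably read $\vr^{-\rho p/(p-1)}$, as you correctly have it in your computation.
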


\begin{remark}\label{r:map3}
As before, we pass from $E$ to its $(1+\vr)$-isomorphic copy in $\ell_1^{N(E,\vr)}(X)$
using a change of density.
\end{remark}

The proof of Proposition \ref{p:type L1} very closely follows \cite[Section 3]{BLM}
(which is itself a variation of the
``empirical method''). The only missing ingredient is a vector-valued version of
Pisier Factorization Theorem, which can be obtained quite easily.
To formulate the theorem, we introduce some notation. If $\mu$ is a measure on $\Omega$, then 
$D(\mu) = \{f \in L_1(\mu) : \int f \, d\mu = 1, f \geq 0\}$ is the set of densities.
We continue using the convention $0/0 = 0$.

\begin{theorem}\label{t:pisier}
Suppose $X$ is a Banach space, $0 < r < p < \infty$, and $(x_i)_{i \in I}$ is a
subset of $L_r(\mu,X)$. Then the following are equivalent:
\begin{enumerate}
\item 
There exists a constant $C_1$ and $f \in D(\mu)$ so that, for any $\mu$-measurable set $E$,
and for any $i \in I$, $\|1_E x_i\|_{L_r(\mu,X)} \leq C_1 \Big( \int_E f \, d\mu \Big)^{1/r - 1/p}$.
\item
There exists a constant $C_2$ and $f \in D(\mu)$ so that, for any $i \in I$,
$\|f^{-1/r} x_i\|_{L_{p\infty}(\mu,X)} \leq C_2$.
\item
There exists a constant $C_3$ so that, for any finite sequence $(\alpha_i)$,
\\
$\|\sup_i \{|\alpha_i| \|x_i\|_X\} \|_{L_p(\mu)} \leq C_3 \Big( \sum_i |\alpha_i|^p \Big)^{1/p}$.
\end{enumerate}
Moreover, the constants are proportional to each other. For instance, if (1) holds,
then (2) and (3) also hold, with $C_2, C_3 \leq c C_1$, where $c$ is a universal constant.
\end{theorem}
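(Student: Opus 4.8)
The plan is to reduce the statement to the classical (scalar-valued) Pisier factorization theorem. The key observation is that each of the three conditions depends on the family $(x_i)_{i\in I}$ only through the nonnegative scalar functions $g_i := \|x_i(\cdot)\|_X \in L_r(\mu)$. Indeed, for every $\mu$-measurable set $E$ one has $\|1_E x_i\|_{L_r(\mu,X)} = \|1_E g_i\|_{L_r(\mu)}$; for every $f \in D(\mu)$ one has $\|f^{-1/r} x_i\|_{L_{p\infty}(\mu,X)} = \|f^{-1/r} g_i\|_{L_{p\infty}(\mu)}$, since $f \geq 0$ forces the pointwise $X$-norm of $f^{-1/r}x_i$ to equal $f^{-1/r}g_i$; and $\sup_i\{|\alpha_i|\,\|x_i\|_X\} = \sup_i |\alpha_i| g_i$ pointwise. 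So first I would record these identities, noting that the density $f$ appearing in $(1)$ and $(2)$, and all the constants, are literally the same for $(x_i)$ and for $(g_i)$; this reduces the theorem to the case $X = \R$, $x_i = g_i \geq 0$, which is Pisier's factorization theorem for a family of nonnegative functions in $L_r(\mu)$. For completeness I would then recall the shape of the scalar proof.

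Next I would dispatch the two easy implications, which also give the proportionality of the constants asserted in the ``Moreover'' part. For $(1)\Rightarrow(2)$: taking $E = \{f=0\}$ in $(1)$ shows that $g_i = 0$ a.e.\ on $\{f=0\}$, so $f^{-1/r}g_i$ is well defined; then a short computation on the level sets $\{f^{-1/r}g_i > \lambda\}$, comparing $\int g_i^r$ with $\big(\int f\big)^{1-r/p}$ there, yields the weak-type estimate with $C_2 \leq c\, C_1$. For $(2)\Rightarrow(3)$: factor $g_i = f^{1/r}\cdot(f^{-1/r}g_i)$, use the subadditivity of the distribution function of a supremum together with the weak-$L_p$ bound on each $f^{-1/r}g_i$, and finish with H\"older's inequality to obtain $C_3 \leq c\, C_2$.

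The substantial step --- and the one I expect to be the main obstacle --- is $(3)\Rightarrow(1)$, which closes the cycle $(1)\Rightarrow(2)\Rightarrow(3)\Rightarrow(1)$ and hence gives the full equivalence. Here the density $f$ cannot be produced explicitly; the plan is the usual Hahn--Banach / minimax argument. I would introduce a functional that is concave in the density variable $f$ --- which ranges over the convex set $D(\mu)$, taken, after the standard reductions to finitely many test functions and to sets of finite measure, inside a weak-$*$ compact set --- and convex in a finite ``test'' datum, namely a finitely supported choice of indices, scalars $\alpha_i$, measurable sets, and a threshold $\lambda$, arranged so that the sign of the functional detects the failure of $(1)$. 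Were there no single $f$ good for all pairs $(i,E)$, then letting $f$ vary would produce a family of ``bad'' test data which, by a minimax theorem (Ky Fan's inequality, or Sion's theorem), could be amalgamated into a single inequality contradicting $(3)$. The real work lies in choosing this functional correctly and in verifying the compactness, semicontinuity and concavity/convexity hypotheses of the minimax lemma, together with the routine measure-theoretic bookkeeping (the $0/0$ convention, passage from finite subfamilies of $(x_i)_{i\in I}$ to all of $I$, $\sigma$-finiteness) and the tracking of constants. All of this is carried out exactly as in the scalar theory, so the only genuinely new ingredient needed here is the reduction of the first paragraph.
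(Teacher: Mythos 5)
Your reduction to the scalar case via $g_i=\|x_i(\cdot)\|_X$ is exactly the paper's proof, which consists of the single observation that all three conditions depend on $(x_i)$ only through these functions and then invokes Pisier's scalar factorization theorem (Theorem 1.1 of \cite{Pi86}). The remainder of your proposal merely re-sketches the known scalar argument, so the approach is correct and essentially identical to the paper's.
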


In part (3), $\sup_i \{|\alpha_i| \|x_i\|_X\}$ is a scalar-valued function defined
almost everywhere on $\Omega$ via
$[\sup_i \{|\alpha_i| \|x_i\|_X\}](\omega) = \sup_i \{ |\alpha_i| \|x_i(\omega)\|_X\}$. The proof is obtained by applying \cite[Theorem 1.1]{Pi86} to the scalar-valued functions
$y_i(\cdot) = \|x_i(\cdot)\|_X \in L_r(\mu)$.

Next state a vector-valued counterpart of \cite[Theorem 1.2]{Pi86}.

\begin{corollary}\label{c:op pisier}
Suppose $X, Z$ are Banach space, $0 < r < p < \infty$, and $T \in B(Z,L_r(\mu,X))$.
Then the following are equivalent:
\begin{enumerate}
\item
There exists a constant $C_1$ and $f \in D(\mu)$ so that, for any $\mu$-measurable set $E$,
and for any $z \in Z$, $\|1_E Tz\|_{L_r(\mu,X)} \leq C_1 \|z\| \Big( \int_E f \, d\mu \Big)^{1/r - 1/p}$.
\item
There exists a constant $C_2$ and $f \in D(\mu)$ so that, for any $z \in Z$,
\\ $\|f^{-1/r} T z\|_{L_{p\infty}(\mu,X)} \leq C_2 \|z\|$.
\item 
There exists a constant $C_3$ so that, for any finite sequence $(z_i)$,
\\
$\|\sup_i \|Tz_i\|_X\|_{L_r(\mu)} \leq C_3 \big( \sum_i \|z_i\|^p \big)^{1/p}$.
\item
The operator $T$ admits a factorization $T = M_f \circ T^\prime$, where $f \in D(\mu)$,
$T^\prime : Z \to L_{p\infty}(f \cdot \mu, X)$ has norm not exceeding $C_3$, and
$M_f : L_{p\infty}(f \cdot \mu, X) \to L_r(\mu,X)$ is the operator of multiplication by $f$.
It is easy to see that $\|M_f\|$ is bounded above by a universal constant.
\end{enumerate}
Moreover, the constants are proportional to each other.
\end{corollary}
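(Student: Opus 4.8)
The strategy is to obtain Corollary \ref{c:op pisier} from Theorem \ref{t:pisier} in exactly the way \cite[Theorem 1.2]{Pi86} is obtained from \cite[Theorem 1.1]{Pi86} in the scalar case. Namely, I would apply Theorem \ref{t:pisier} to the family $(x_z)_{z\in B_Z}$ in $L_r(\mu,X)$ defined by $x_z=Tz$, i.e. the image of the unit ball of $Z$ under $T$; that this is an arbitrary (possibly uncountable) subset of $L_r(\mu,X)$ causes no difficulty, since conditions (1)--(3) of Theorem \ref{t:pisier} only refer to finite subfamilies.

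The equivalences (1) $\Leftrightarrow$ (2) $\Leftrightarrow$ (3) of the Corollary then follow by homogeneity. Indeed, for nonzero $z\in Z$ one has $x_{z/\|z\|}=Tz/\|z\|$, so conditions (1) and (2) of the Corollary are, after multiplying through by $\|z\|$, precisely conditions (1) and (2) of Theorem \ref{t:pisier} for the family $(x_z)$; the reverse implications are obtained by restricting to $\|z\|\le 1$. For condition (3), given a finite sequence $(z_i)$ in $Z\setminus\{0\}$ one sets $\alpha_i=\|z_i\|$ and applies Theorem \ref{t:pisier}(3) to the $x_{z_i/\|z_i\|}$, and conversely writes $z_i=\alpha_i w_i$ with $w_i\in B_Z$. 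This matches the Corollary's (3) with Theorem \ref{t:pisier}(3) up to the choice of the outer norm: on a probability space the $L_r(\mu)$- and $L_p(\mu)$-versions of this condition are interchangeable up to universal constants (one inequality is trivial, the converse is part of Pisier's theorem; cf. \cite{Pi86}), both being equivalent to (1)--(2), and I would simply quote this. So (1) $\Leftrightarrow$ (2) $\Leftrightarrow$ (3) for the Corollary is entirely contained in Theorem \ref{t:pisier}.

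It then remains to splice (4) into the cycle, which is a routine transcription of Pisier's bookkeeping to the $X$-valued setting. The implication (4) $\Rightarrow$ (2) is immediate: if $T=M_f\circ T'$ with the stated properties, then rewriting the multiplier $M_f$ as a change of density gives $\|f^{-1/r}Tz\|_{L_{p\infty}(\mu,X)}\lesssim\|T'z\|_{L_{p\infty}(f\cdot\mu,X)}\le C_3\|z\|$. For (2) $\Rightarrow$ (4), take the density $f$ furnished by (2), let $T'z$ be $Tz$ viewed in $L_{p\infty}(f\cdot\mu,X)$ after the corresponding rescaling, and let $M_f$ be the attendant multiplication operator; the only thing to verify is that $\|M_f\|\le c$ for a universal $c$, which is the Hölder inequality for Lorentz spaces (the product of an $L_{p\infty}(f\cdot\mu)$ function with the suitable power of $f$ lands in $L_r(\mu)$ with controlled norm). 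Since $X$ plays no active role in any of these steps beyond what Theorem \ref{t:pisier} already packages, all constants come out proportional, as claimed. The main point is really the appeal to Theorem \ref{t:pisier}; the only (mild) nuisances are the density/Lorentz-space bookkeeping in (2) $\Leftrightarrow$ (4) and the harmless $L_r$-versus-$L_p$ interchange in part (3).
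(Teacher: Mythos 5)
Your proposal is correct and follows exactly the paper's (one-line) argument: apply Theorem \ref{t:pisier} to the image of the unit ball of $Z$ and conclude by homogeneity, with the factorization statement (4) being routine bookkeeping. The extra care you take with the $L_r$-versus-$L_p$ outer norm in condition (3) and the Lorentz-space H\"older estimate for $M_f$ is sensible but does not change the route.
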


To prove this statement, simply apply Theorem \ref{t:pisier} to the image of the unit ball of $Z$.

We turn then to a vector-valued version of \cite[Remark 1.4]{Pi86}.

\begin{corollary}\label{c:type}
Suppose $Z$ is a Banach space of type $p > 1$. Then for any $T \in B(Z,L_1(\Omega,\mu,X))$
there exists a density $f \in D(\mu)$ so that
$$
\{\omega \in \Omega : f(\omega) = 0\} \subset \cap_{z \in Z}
\{\omega \in \Omega : [Tz](\omega) = 0 \} ,
$$
and, for every $z \in Z$, $\|f^{-1} Tz\|_{L_{p\infty}(\Omega,f\mu,X)} \leq e \|T\| \|z\|$.
\end{corollary}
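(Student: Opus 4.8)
The plan is to reduce the statement to Corollary~\ref{c:op pisier} applied with $r=1$ (and $p>1$ the given type exponent), by verifying hypothesis~(3) of that corollary with the help of the type-$p$ assumption on $Z$; this is exactly the vector-valued analogue of the argument behind \cite[Remark 1.4]{Pi86}.

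First I would check condition~(3). Let $(z_i)_{i=1}^k$ be a finite sequence in $Z$ and let $(\vr_i)_{i=1}^k$ be independent Rademacher variables. For a fixed $\omega\in\Omega$ and a fixed index $j$, freezing the signs $\vr_i$ with $i\neq j$ and averaging over $\vr_j$ gives, by the triangle inequality, $\expe_\vr\big\|\sum_i \vr_i (Tz_i)(\omega)\big\|_X \geq \|(Tz_j)(\omega)\|_X$; taking the supremum over $j$ and integrating in $\omega$ (Fubini, nonnegative integrand) yields
$$\Big\| \sup_i \|Tz_i\|_X \Big\|_{L_1(\mu)} \leq \expe_\vr \Big\| T\Big(\textstyle\sum_i \vr_i z_i\Big) \Big\|_{L_1(\mu,X)} \leq \|T\|\, \expe_\vr \Big\| \sum_i \vr_i z_i \Big\|_Z \leq T_p(Z)\,\|T\| \Big( \sum_i \|z_i\|^p \Big)^{1/p},$$
the last inequality being the definition of the type-$p$ constant. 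Hence condition~(3) of Corollary~\ref{c:op pisier} holds with $C_3 = T_p(Z)\|T\|$.

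Next I would invoke the equivalence (3)$\Leftrightarrow$(4) of Corollary~\ref{c:op pisier}: there exist $f\in D(\mu)$ and a factorization $T = M_f\circ T'$, where $M_f$ is multiplication by $f$ and $T':Z\to L_{p\infty}(f\mu,X)$ has norm bounded by a constant multiple of $C_3$. Since $T'z = f^{-1}Tz$ where it is defined, this is precisely the asserted estimate $\|f^{-1}Tz\|_{L_{p\infty}(\Omega,f\mu,X)}\le e\,\|T\|\,\|z\|$, once the proportionality constants of Theorem~\ref{t:pisier} and \cite[Remark 1.4]{Pi86} are tracked. Finally, the nullset inclusion is automatic from the factorization: from $(Tz)(\omega)=f(\omega)\,(T'z)(\omega)$ we get that $f(\omega)=0$ forces $(Tz)(\omega)=0$ for every $z\in Z$, i.e. $\{f=0\}\subseteq\bigcap_{z\in Z}\{Tz=0\}$; this inclusion is exactly what makes $f^{-1}Tz$ well-defined (with the convention $0/0=0$) and is the reason it is recorded in the statement. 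The only genuine work is the bookkeeping of the constants through the chain of corollaries; the analytic content is the one-line estimate above, combining the Rademacher contraction/convexity principle with the definition of type $p$, so I expect no real obstacle.
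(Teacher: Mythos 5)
Your proof is correct, and its overall architecture (verify condition (3) of Corollary \ref{c:op pisier} with $r=1$, then pass to the factorization to extract the density $f$, the weak-$L_p$ estimate, and the nullset inclusion) matches the paper's. But the key step --- establishing condition (3) --- is done by a genuinely different argument. The paper first handles scalar-valued $X$ by duality: it bounds $\|T_n^* : \widetilde{L_\infty(\ell_1^n)} \to \ell_q^n(Z^*)\|$ by $\pi_{q1}(I_{Z^*})$ and then invokes $\pi_{q1}(I_{Z^*}) \leq C_q(Z^*) \leq T_p(Z)$ from \cite[Proposition 11.10]{DJT}; the general case is then reduced to the scalar one by composing $T$ with norming functionals $\phi \in L_\infty(\mu,X^*)$. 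You instead work directly in the vector-valued setting: the pointwise bound $\sup_j\|(Tz_j)(\omega)\|_X \leq \expe_\vr\big\|\sum_i \vr_i (Tz_i)(\omega)\big\|_X$ (valid because the right-hand side is independent of $j$), followed by Fubini, the bound by $\|T\|$, and the first-moment form of type $p$ (equivalent to the usual one by Kahane), gives $C_3 = T_p(Z)\|T\|$ in one stroke. Your route is shorter, avoids both the duality step and the scalar reduction, and uses nothing beyond the elementary fact that a Rademacher average dominates each individual summand; the paper's route is the one that literally parallels \cite[Remark 1.4]{Pi86} and yields the a priori no larger constant $\pi_{q1}(I_{Z^*})$ in place of $T_p(Z)$. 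In either argument the resulting bound is proportional to $T_p(Z)\|T\|\|z\|$ rather than the bare $e\|T\|\|z\|$ of the statement, so your caveat about constant bookkeeping applies equally to the paper's own sketch.
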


We would like to apply this when $Z$ is a subspace of $L_1(\Omega,\mu,X)$, and
$T$ is the identity map. But, we do not know how!
\begin{proof}[Sketch of a proof]

First assume $X$ is the field of scalars. By \cite[Proposition 11.10]{DJT}, $\pi_{q1}(I_{Z^*}) \leq C_q(Z^*) \leq T_p(Z)$
(here $1/p + 1/q = 1$). Recall that $\pi_{q1}(I_{Z^*})$ is the smallest $C$ with
the property that, for any $z_1^*, \ldots, z_n^*$,
$$\big(\sum_{i=1}^n \|z_i^*\|^q\big)^{1/q} \leq C \max \Big\{\big\|\sum_{i=1}^n w_i z_i^*\big\|: |w_i|=1, i=1,\cdots, n \Big\}.$$

Set $C = \pi_{q1}(I_{Z^*})$, and consider a contraction $T : Z \to L_1(\mu)$.
We need to show that case (3) of Corollary \ref{c:op pisier} holds, with $C_3 = C$.
In other words, we have to show that, for any $n$,
$T_n : \ell_p^n(Z) \to \widetilde{L_1(\ell_\infty^n)}$ has norm not exceeding $C$.
Here, we need to recall some notation from e.g.~\cite{LT2}: if $E$ is a Banach lattice,
and $1 \leq s \leq \infty$, then
$$
\left\|(f_1, \ldots, f_n)\right\|_{\widetilde{E(\ell_s^n)}} =
\left\| \Big( \sum_{i=1}^n |f_i|^s \Big)^{1/s} \right\|_E
$$
(with the obvious modification in the case $s = \infty$).
By duality, it suffices to show the estimate
$\|T_n^* : \widetilde{L_\infty(\ell_1^n)} \to \ell_q^n(Z^*)\| \leq C$
(here $1/p + 1/q = 1$). In other words, we have to show that, if the functions
$f_1, \ldots, f_n \in L_\infty(\mu)$ satisfy $\|\sum_{i=1}^n |f_i| \|_\infty \leq 1$,
then $\sum_{i=1}^n \|T^* f_i\|^p \leq C^p$. But
\begin{align*}
\big\|\sum_{i=1}^n |f_i| \big\|_\infty &=
\Big\|\max \Big\{\Big|\sum_{i=1}^n w_i f_i\Big |: |w_i|=1, i=1,\cdots, n \Big\}\Big\|_\infty\\
&\geq \max \Big\{\Big\|\sum_{i=1}^n w_i f_i\Big\|_\infty: |w_i|=1, i=1,\cdots, n \Big\}.
\end{align*}

As $T$ is a contraction,
$$\max \Big\{\Big\|\sum_{i=1}^n w_i Tf_i\Big\| :|w_i|=1, i=1,\cdots, n \Big\}\leq \max \Big\{\Big\|\sum_{i=1}^n w_i f_i\Big\|_\infty:|w_i|=1, i=1,\cdots, n \Big\} \leq 1.$$
The definition of $C = \pi_{q1}(I_{Z^*})$ finishes the proof of Corollary \ref{c:op pisier}(3)
in the case when $X$ is the field of scalars.

Now consider the case of general $X$. Consider $\phi \in L_\infty(\mu,X^*)$ with $\|\phi\| \leq 1$, and the
operator $T_\phi : Z \to L_1(\mu) : [Tz](\cdot) = \langle \phi(\cdot), [Tz](\cdot) \rangle$.
Clearly, $\|T_\phi\| \leq \|T\| \leq 1$. By the previous paragraph,
$\big\| \sup_i |T_\phi z_i| \big\|_{L_1(\mu)} \leq C \big( \sum_{i=1}^n \|z_i\|^p \big)^{1/p}$
It remains to note that $\phi$ can be selected to make $\big\| \sup_i |T_\phi z_i| \big\|_{L_1(\mu)}$
arbitrarily close to $\big\|\sup_i \|Tz_i\|_X \big\|_{L_1(\mu)}$.
\end{proof}

\hfill \noindent \textbf{Marius Junge} \\
\null \hfill Department of Mathematics \\ \null \hfill
University of Illinois at Urbana-Champaign\\ \null \hfill 1409 W. Green St. Urbana, IL 61891. USA
\\ \null \hfill\texttt{junge@math.uiuc.edu}

\vskip 0.5cm

\hfill \noindent \textbf{Timur Oikhberg } \\
\null \hfill Department of Mathematics \\ \null \hfill
University of Illinois at Urbana-Champaign\\ \null \hfill 1409 W. Green St. Urbana, IL 61891. USA
\\ \null \hfill\texttt{oikhberg@illinois.edu}

\vskip 0.5cm

\hfill \noindent \textbf{Carlos Palazuelos} \\
\null \hfill Instituto de Ciencias Matem\'aticas, ICMAT\\
\null \hfill Facultad de Ciencias Matem\'aticas\\ \null \hfill
Universidad Complutense de Madrid \\ \null \hfill Plaza de Ciencias s/n.
28040, Madrid. Spain
\\ \null \hfill\texttt{carlospalazuelos@mat.ucm.es}

\end{document}